\def\ind{{\mathchoice {\rm 1\mskip-4mu l} {\rm 1\mskip-4mu l}
{\rm 1\mskip-4.5mu l} {\rm 1\mskip-5mu l}}}
\title{ Non-Asymptotic Mean-Field Games}
\author{ Hamidou Tembine (IEEE Senior Member)
\thanks{We are grateful to many seminar and conference participants such as those at  NETGCOOP 2012, at the Second Workshop on Mean Field Games (Padova, Italy, September 2013) for their valuable comments and suggestions on the preliminary versions of this work. }
\thanks{H. Tembine is a member of the KAUST Strategic Research Initiative Center for Uncertainty Quantification in Computational Science and Engineering.
        {\tt\small E-mail: tembine@ieee.org}}
        }
 \newtheorem{example}{Example}
 \newtheorem{thm}{Result}
\newtheorem{defi}{Definition}
\newtheorem{rem}{Remark}
\begin{document}

\maketitle

\begin{abstract}
Mean-field games have been studied under the assumption of very large number of players. For such large systems, the basic idea consists to approximate  large games by a stylized game model with a continuum of players. The approach has been shown to be useful in some applications.  However, the stylized game model with  continuum of decision-makers is rarely observed in practice and the approximation proposed in the asymptotic  regime is meaningless for networks with few entities.
In this paper we propose a mean-field framework that is suitable not only for large systems but also for a small world with few number of entities. The applicability of the proposed framework is illustrated through various examples including  dynamic auction with asymmetric valuation distributions, and spiteful bidders. 
\end{abstract}
Keywords:
Nonasymptotic, approximation, games with few decision-makers.

\section{Introduction} 

Recently there have been renewed interests in large-scale interaction in several research disciplines, with its uses in wireless networks, big data, cyber-physical systems, financial markets, intelligent transportation systems, smart grid, crowd safety, social cloud networks and smarter cities.

 In mathematical physics, most of models are analyzed in the asymptotic regime when the size of the system grows without bounds. As an example, the McKean-Vlasov model~\cite{kac1,kac2,villani} for interacting particles is analyzed  when the number of particles tends to infinity. Such an approach is referred to as mean field approach. The seminal works of Sznitman~\cite{dyna1} in the 1980s and the more recent work of Kotolenez \& Kurtz \cite{dyna2} show that the asymptotic system provides a good approximation of the finite system in the following sense: For any tolerance level $\epsilon>0$ there exists a population size $n_{\epsilon}$ such that for any $n\geq n_{\epsilon},$ the error gap between the solution of the infinite system and the system with size $n$ is at most $\epsilon.$ Moreover, the work in \cite{dyna2} shows that the number $n_{\epsilon}$ is in order of $O\left( \log(\frac{1}{\epsilon}) \frac{1}{\epsilon^{d+2}}\right)$ for a class of smooth functions, where $d$ denotes the dimension of the space.  Thus, for $n< n_{\epsilon}$ this current theory  does not give an approximation that is meaningful.

In queueing theory, the number of customers is usually assumed to be large or follows a certain distribution with unbounded support (e.g., exponential, Poisson etc) and the buffer size (queue) can be infinite. However, many applications of interests such as airport boarding queues, supermarket queues, restaurant queue, iphone/ipad waiting queue involve a finite number of customers/travelers. Approximation by a continuum of decision-makers may not reflect the reality.  For example the number of clients in the supermarket queue cannot exceed the size of available capacity of markets and there is a certain distance between the clients to be respected. In other words, human behaviors are not necessarily like  standard fluid dynamics.
In game theory,
 the rapidly emerging field of mean-field games~\cite{lions} is addressing  behavioral and algorithmic issues \cite{learningbook} for mathematical models with continuum of players. We refer the reader to \cite{lionssurvey} for a survey on (asymptotic) mean field games.

The classical works mentioned above provide rich mathematical foundations and equilibrium concepts in the asymptotic regime, but relatively little in the way of computational and representational insights that would allow for few number of players. Most of the  mean-field game models consider a continuum of players, which seems not realistic in terms of most applications of interests. 
Below we give some limitations of the {\it asymptotic} mean-field approaches in engineering and in economics:
\begin{itemize}\item
In wireless networks, the number of interacting nodes at the same slot in the same range is finite and currently the capacity/bandwidth of the system is limited. Therefore, a mean-field model for infinite capacity and infinite number of nodes is not plausible. The result of infinite system may not capture the real system with only few number of nodes. 
\item In most of the current markets, the number of traders is {\it finite}. In that context it is well known that the Bayesian-Cournot game may not have an ex-post equilibrium whenever the number of traders is finite. However, the infinite game with continuum of traders has a pure (static mean-field) equilibrium. If our prediction is the "mean-field equilibrium", in what sense the (static) mean-field ex-post equilibrium \cite{mfsg} captures the finite system?
\end{itemize}

Our primarily goal in this article is to provide a simple and easy to check condition such that mean-field theory can be used for finite-scale which we call {\it non-asymptotic mean-field} approach. We investigate the nonasymptotic mean-field under two basic conditions. The first condition is indistinguishability (or interchangeability) of the payoff functions. The indistinguishability
property is easy to verify. The indistinguishability assumption is implicitly used in the classical (static) mean-field analysis including the seminal works of Aumann 1964~\cite{aumann},  Selten 1970~\cite{selten}, Schmeidler 1973. This assumption is also implicitly used in the dynamic version of mean-field games by Jovanovic \& Rosenthal 1988\cite{ros1}, Benamou \& Brenier 2000~\cite{benamoubrenier}  and Lasry \& Lions 2007~\cite{lions}.
The second condition is the (regularity) smoothness of the payoff functions.
The regularity property
is relatively easy to check.

Based on these two conditions, we present a simple approximation framework for finite horizon mean-field systems. The framework can be easily extended to infinite horizon case.
The non-asymptotic mean field  approach is based on a simple observation that  the
many effects of different actions cancel out when the payoff
is indistinguishable. Nevertheless, it can lead to a significant
simplification of mathematical mean-field models in finite regime. The approach presented here is non-asymptotic and is unrelated to the mean-field convergence that
originates from  law of large numbers (and its generalization to de Finetti-Hewitt-Savage functional mean-field convergence) in large populations. The non-asymptotic mean field approach
 holds even when there are only few players in a game,
or  few nodes in a network. 

The idea presented here is inspired from the works in \cite{dynamict1,reftv2,pnas} on the so-called averaging principle. These previous works are limited to static and one-shot games.  Here we use that idea not only for static games but also for  dynamic mean-field games. One of the motivations of the asymptotic mean field game approach is that it may reduce the complexity analysis of large systems. The present work  goes beyond that. We believe that if the complexity of the infinite system can be reduced easily then, the finite system can also be studied using a non-asymptotic mean-field approach.

In order to apply the mean-field approach to a system with arbitrary number of players, we shall exploit more the structure of objective function and the main assumption of the model which is the indistinguishability property, i.e., the performance index is unchanged if one permutes the label of the players. This is what we will do in this work. The aggregative  structure of the problem and the indistinguishability property of the players are used to derive an error bound for any number of players. Interestingly, our result holds not only  for large number of players but also for few number of players. For  example, for $n=5$ players, there is no systematic way to apply the theory developed in the previous works~\cite{lions,aumann} but the non-asymptotic mean-field result presented here could be applied.
The non-asymptotic mean-field result  does not impose additional assumptions on the payoff function. We show that the indistinguishability property  provides an accurate error bound for any system size.  We show that the total equilibrium payoff with heterogeneous parameters can be approximated by the symmetric payoff where the symmetry is the respect to the mean of those parameters. These parameters can be a real number, vector, matrix or a infinite functional. The proof of the approximation error is essentially based on a Taylor expansion which cancels out the first order terms due to indistinguishability property.

 We provide various examples where non-asymptotic mean-field interaction is required and the indistinguishability property could be exploited more efficiently. We present of queueing system with only few servers where closed-form expression of the waiting time is not available and the use of the present framework gives  appropriate bounds. As second main example focuses on dynamic auctions with asymmetric bidders that can be self-interested, malicious or  spiteful.
In models of first-price auctions, when bidders are ex ante heterogeneous, deriving explicit equilibrium bid functions is an open issue. Due to the boundary-value problem nature of the equilibrium,  numerical methods remain challenging issue.
Recent theoretical research concerning asymmetric auctions have determined some qualitative properties
these bid functions must satisfy when certain conditions are met. Here we propose an accurate approximation based on non-asymptotic mean field game approach and examine the relative
expected payoffs of bidders and the seller revenue (which is indistinguishable) to decide whether the approximate solutions are
consistent with theory.

The remainder of the paper is structured as follows. In Section \ref{staticscale} we present a mean field system with arbitrary number of interacting entities and propose a nonasymptotic static mean field framework. In Section \ref{secdynamicscale} we extend our basic results in a dynamic setup. In Section \ref{basicapplication} we present  applications of nonasymptotic mean-field approach to collaborative effort game, approximation of queueing delay performance and  computation of error bound of equilibrium bids in dynamic auction with asymmetric bidders.

We summarize some of the notations in Table \ref{tablenotationjournal}.
\begin{table}[htb]
\caption{Summary of Notations} \label{tablenotationjournal}
\begin{center}
\begin{tabular}{ll}
\hline
  Symbol & Meaning \\ \hline
  $\mathcal{N}$ & set of potential minor players \\
  $n$ & cardinality of $\mathcal{N}$\\
  $\mathcal{A}$ & action space\\
  $a_{j}$ & action of player $j$  \\
  $r_g(a_1,\ldots,a_n)$ & global payoff function of the major player\\
  $\ind_{\{.\}}$ & indicator function.\\
  $\bar{m}^{\bigotimes n}$ & $=(\bar{m},\ldots,\bar{m})$\\
  $\tau_j$ & strategy of player $j$\\
  $R_{j,T}$ & long-term payoff of player $j$ with horizon $[0,T-1]$\\
  \\ \hline
\end{tabular}
\end{center}
\end{table}

\section{Mean-field system for arbitrary number of entities} \label{staticscale}

{ 
Consider an interactive system with $n+1\geq 2$ entities (players) consisting of $n\geq 1 $ generic minor players and one major player (called designer).  The major player has a binary decision set. It consists to propose ($P$) or not to propose ($\bar{P}$) a secondary game between the minors,  thus its action set is $\{P, \bar{P} \}. $  Each of the minor players has to make a decision. Each decision variable  $a_j$ of a minor player $j\in \mathcal{N}:=\{1,2,\ldots, n\}$ belongs to a Polish\footnote{A Polish space $E$ is a Separable topological space $E$ for which there exists a compatible metric $d$ such that $(E,d)$ is a complete metric space. Here, "`separable"' means has a countable dense subset.} space $\mathcal{A}.$
Each minor  player has a payoff function $\hat{r}_{j}(a_0, a_1,\ldots,a_n)\in \mathbb{R}$ where $a_0\in \{P, \bar{P} \}$ is the action of the major player. The major player has its own payoff function that could be the global  performance  of the  minor players or another generic payoff. The payoff function of the major player is captured by a certain function
$\hat{r}_{g}(a_0, a_1,\ldots,a_n)\in \mathbb{R}$ which we call global payoff function. 

The collection $$
\left\{\{0,1,\ldots, n\}, \{P, \bar{P} \}, \mathcal{A}, \hat{r}_{g}, \hat{r}_{j}\right\},
$$
defines a game in strategic-form (or normal form).

 Since the action set of the major player reduces to a binary set, the decision will be driven by the comparison between $\hat{r}_{g}(P, a_1,\ldots,a_n) $  and $r_{g}(\bar{P}, a_1,\ldots,a_n).$ The payoff $\hat{r}_{g}(P, a_1,\ldots,a_n) $ when the continuation game is not proposed can be fixed a certain constant. Therefore we focus on the analysis of 
the payoff $\hat{r}_{g}(P, a_1,\ldots,a_n) $  which we denote  by $r_{g}(a_1,\ldots,a_n).$ Similarly, $\hat{r}_{j}(P, a_1,\ldots,a_n) $ is denoted simply by ${r}_{j}( a_1,\ldots,a_n).$

}

\subsection{Main Assumptions on the structure of payoff function}

{\bf Assumption A0:} Indistinguishability.  We assume that the global payoff function is invariant by permuting the index of the  minor players, i.e., $$r_g(a_1,\ldots,a_i,\ldots,a_j,\ldots,a_n)=$$ $$r_g(a_{\pi(1)},\ldots,a_{\pi(i)},\ldots,a_{\pi(j)},\ldots,a_{\pi(n)}),
$$
for every permutation $\pi:\ \mathcal{N} \longrightarrow  \mathcal{N}, $  where  $\mathcal{N}:=\{1,2,\ldots,n\}.$

 To verify $A0$, it suffices to check for pairwise interchangeability, i.e.,   permutation of any two of the coordinates. In mathematics, the indistinguishability property is sometimes referred as  symmetric function, i.e., one whose value at any n-tuple of arguments is the same as its value at any permutation of that $n-$tuple.

{\bf Assumption A1:} Smoothness. We assume that the objective function $r_g$ is (locally) twice differentiable with the respect to the variables.

It is important to notice that the assumption A0 can be easily checked by designers, engineers and non-specialists.
In practice, A0 will result in functions that can be expressed in terms of the mean $\bar{m}=\frac{1}{n}\sum_{j=1}^n a_j,$ or other aggregative terms such as  $ \frac{1}{n}\sum_{j=1}^n a_j^2,\ \  \ \frac{1}{n}\sum_{j=1}^n \phi(a_j,\bar{m}) \ldots,$ $\left(\prod_{j=1}^n a_j\right),$ etc. Assumption A0 is implicitly used in \cite{aumann,selten,ros1,lions}.

Our goal is provide a useful approximation and error bound for the global  payoff $r_g$ in an equilibrium or in function of the parameters of the game.

\subsection{Applicability of the payoff structure}
 These type of payoff functions  have wide range of applications:

 In economics and financial markets, the market price (of products, good, phones, laptops, etc) is influenced by the total demand and total supply. Examples include
   \begin{itemize}
   \item Public good\footnote{Goods are called public if one person's consumption of them does not preclude consumption by others. Typical examples are television programs and uncongested roads.
} provisioning with total payoff $g(\bar{m}).$
   \item Beauty contest games with payoff $R- \parallel a_j-p \bar{m}\parallel, \ 0<p<1$
   \item Cournot oligopoly model with payoff $a_jp(\frac{\sum_{j'=1}^n a_{j'}}{n})-c(a_j).$
   \end{itemize}

  In queueing theory, the task completion of data centers or  servers is influenced by the mean of how much the other data centers/servers can serve.
 In resource sharing problems, the utility/disutility of a player depends on the demand of the other players. Examples include cost sharing in coalitional system and  capacity and bandwidth sharing in cloud networking.
 In wireless networks, the performance of a  wireless node is influenced by the interference created by the other  transmitters.
 In congestion control, the delay of a network depends on the aggregate (total) flow and the congestion level of the links/routes.

\subsection{Approximation for static games}

Next we provide the basic  results that hold for both  non-asymptotic and asymptotic static systems. 

\begin{thm} Assume that $A0$ and $A1$ hold. Then, the following results hold:
\begin{itemize}
\item $\partial_{a_j}r_g(\bar{m}^{\bigotimes n})=\partial_{a_1}r_g(\bar{m}^{\bigotimes n})$ where $\bar{m}^{\bigotimes n}:=(\bar{m},\ldots,\bar{m})$ and
    \begin{eqnarray} \label{meanaction}
    \bar{m}=\frac{1}{n}\sum_{j=1}^n a_j=\int_{b\in \mathcal{A}} b \left[ \frac{1}{n}\sum_{j=1}^n \delta_{a_j}\right](db),
    \end{eqnarray} $\delta_{a_j}$ is the Dirac measure concentrated at the point $a_j,$
     \begin{eqnarray} \label{meanfield2}m=\frac{1}{n}\sum_{j=1}^n \delta_{a_j}
      \end{eqnarray}
\item
The structure of the payoff function implies that the first order term in the Taylor expansion is cancelled out.
\item The cross-derivatives are independent of the labels: $\partial^2_{a_ia_j}r_g(\bar{m}^{\bigotimes n})=\partial^2_{a_1a_2}r_g(\bar{m}^{\bigotimes n})$
\end{itemize}
\end{thm}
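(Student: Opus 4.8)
The plan is to derive all three claims from the symmetry assumption A0 together with the single observation that the diagonal point $\bar{m}^{\bigotimes n}$ is fixed by every coordinate permutation, so that permutations act trivially at the point of evaluation while still constraining the derivatives.

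For the first bullet, I would fix an index $j$ and let $\sigma$ be the transposition exchanging $1$ and $j$. Writing $g_\sigma(a_1,\ldots,a_n) := (a_{\sigma(1)},\ldots,a_{\sigma(n)})$, assumption A0 gives the functional identity $r_g = r_g \circ g_\sigma$. Differentiating this identity in the first coordinate and applying the chain rule, the variable $a_1$ is routed into argument slot $\sigma^{-1}(1)=j$, so the only surviving term is $\partial_{a_1}(r_g\circ g_\sigma)(a) = (\partial_{a_j} r_g)(g_\sigma(a))$. Evaluating at $a=\bar{m}^{\bigotimes n}$ and using $g_\sigma(\bar{m}^{\bigotimes n}) = \bar{m}^{\bigotimes n}$ then yields $\partial_{a_1} r_g(\bar{m}^{\bigotimes n}) = \partial_{a_j} r_g(\bar{m}^{\bigotimes n})$, which is the claim.

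For the second bullet, I would Taylor-expand $r_g$ about the diagonal point $\bar{m}^{\bigotimes n}$. The first-order contribution is $\sum_{j=1}^n \partial_{a_j} r_g(\bar{m}^{\bigotimes n})\,(a_j-\bar{m})$. By the first bullet every partial derivative equals a common constant $c:=\partial_{a_1} r_g(\bar{m}^{\bigotimes n})$, so this term factors as $c\sum_{j=1}^n (a_j-\bar{m})$, and the defining relation $\bar{m} = \frac{1}{n}\sum_{j=1}^n a_j$ forces $\sum_{j=1}^n (a_j - \bar{m})=0$. Hence the first-order term vanishes identically, which is exactly the advertised cancellation.

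For the third bullet, I would repeat the chain-rule argument with a permutation $\pi$ routing $a_1$ into slot $i$ and $a_2$ into slot $j$, and differentiate the identity $r_g = r_g\circ g_\pi$ twice, once in $a_1$ and once in $a_2$. Two applications of the chain rule leave only the mixed term $(\partial^2_{a_i a_j} r_g)(g_\pi(a))$, and evaluating at the permutation-invariant point $\bar{m}^{\bigotimes n}$ gives $\partial^2_{a_1 a_2} r_g(\bar{m}^{\bigotimes n}) = \partial^2_{a_i a_j} r_g(\bar{m}^{\bigotimes n})$. The only care needed throughout is the bookkeeping of which slot each variable is sent to under $g_\pi$; the symmetry of second derivatives guaranteed by A1 removes any ambiguity about the order of differentiation. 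I expect this indexing bookkeeping to be the one place where slips are easy, but no genuine analytic difficulty arises beyond the twice-differentiability already furnished by A1.
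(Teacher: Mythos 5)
Your proposal is correct and follows essentially the same route as the paper: permutation invariance (via transpositions) forces equality of the first partials and of the cross partials at the fixed diagonal point $\bar{m}^{\bigotimes n}$, and the first-order Taylor term then vanishes because $\sum_{j=1}^n (a_j-\bar{m})=0$. If anything, your version is slightly more careful than the paper's, since you evaluate the common derivative at the symmetric point $\bar{m}^{\bigotimes n}$ (where the equality of partials is actually justified), whereas the paper's displayed computation writes $\partial_{a_j}r_g(a)=\partial_{a_1}r_g(a)$ at a generic point $a$.
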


Note that this theorem can be used for games with continuous action space as well as for games with discrete action space via mixed extensions. Examples of games that satisfy A0-A1 includes Prisoner Dilemma, Battle of Sex, Hawk-Dove, coordination games, anti-coordination games, minority games, matching pennies, etc. 

\begin{proof}
The first item is immediately proved by using the indistinguishability property in the definition of directional derivative.
For the second item we use the first item and the relation (\ref{meanaction}). The first order derivative term is
\begin{align}
&\sum_{j=1}^n (a_j-\bar{m})\partial_{a_j}r_g(a)\\ \nonumber &= \sum_{j=1}^n( a_j-\bar{m})\partial_{a_1}r_g(a)\\ \nonumber
&= \partial_{a_1}r_g(a) \left[ \sum_{j=1}^n (a_j-\bar{m} )\right]\\ \nonumber
&=  \partial_{a_1}r_g(a) \left[\left(\sum_{j=1}^n a_j\right) -n\bar{m}\right]\\ 
&= \partial_{a_1}r_g(a) \left[\sum_{j=1}^n a_j-\sum_{j=1}^n a_j\right]=0.
\end{align}
The third item uses another exchange of positions between $i$ and $j.$
\end{proof}

\begin{thm} \label{approxt1}
Suppose that the payoff function $r_g$ satisfies the assumptions $A0$ and $A1.$  Assume that $a$ is in a small neighborhood of the mean vector $\bar{m}^{\bigotimes n}:=(\bar{m},\ldots,\bar{m})$ i.e., there is a  small positive number $c_{\bar{m},r_g}$ which may depend on $\bar{m}$ and the function $r_g$ such that $ \parallel (a_1, \ldots,a_n)-\bar{m}^{\bigotimes n}\parallel \leq c_{\bar{m},r_g}$ then
$
 r_g(a)-\bar{r}(\bar{m})=O\left(\parallel a-\bar{m}^{\bigotimes n} \parallel^2_2\right),$
and
$$
\parallel r_g(a)-\bar{r}(\bar{m})\parallel \leq \delta c_{\bar{m},r_g}^2
$$
where $\bar{r}(\bar{m}):=r_g(\bar{m},\ldots,\bar{m})=r_g(\bar{m}^{\bigotimes n}), \delta>0.$

\end{thm}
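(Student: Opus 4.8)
The plan is to Taylor-expand $r_g$ to second order about the symmetric point $\bar{m}^{\bigotimes n}$ and to invoke the preceding result (Result~1) to annihilate the linear term, so that only a quadratic remainder survives; bounding that remainder then yields simultaneously the $O(\cdot)$ estimate and the explicit inequality. First I would set $h := a - \bar{m}^{\bigotimes n}$, so that $h_j = a_j - \bar{m}$, and apply A1 (twice differentiability) to obtain a second-order Taylor formula with Lagrange remainder: there is $\theta \in (0,1)$ with
\begin{align*}
r_g(a) = r_g(\bar{m}^{\bigotimes n}) + \sum_{j=1}^n h_j\, \partial_{a_j} r_g(\bar{m}^{\bigotimes n}) + \frac{1}{2}\sum_{i=1}^n\sum_{j=1}^n h_i h_j\, \partial^2_{a_i a_j} r_g(\bar{m}^{\bigotimes n} + \theta h).
\end{align*}

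Next, the linear term vanishes. By the first item of Result~1 the gradient is constant across coordinates at the symmetric point, $\partial_{a_j} r_g(\bar{m}^{\bigotimes n}) = \partial_{a_1} r_g(\bar{m}^{\bigotimes n})$, whence
\begin{align*}
\sum_{j=1}^n h_j\, \partial_{a_j} r_g(\bar{m}^{\bigotimes n}) = \partial_{a_1} r_g(\bar{m}^{\bigotimes n}) \sum_{j=1}^n h_j = 0,
\end{align*}
since $\sum_{j=1}^n h_j = \left(\sum_{j=1}^n a_j\right) - n\bar{m} = 0$ by the definition of $\bar{m}$ in (\ref{meanaction}). This is exactly the cancellation recorded in the second item of Result~1.

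Finally, I would control the quadratic remainder. Set $M := \sup\{\, \|\nabla^2 r_g(y)\| : \|y - \bar{m}^{\bigotimes n}\|_2 \leq c_{\bar{m},r_g}\,\}$, the supremum of the Hessian operator norm over the closed ball of radius $c_{\bar{m},r_g}$; reading A1 as $C^2$ on the small neighborhood makes $y \mapsto \nabla^2 r_g(y)$ continuous on this compact set, so $M < \infty$. Then
\begin{align*}
\left| r_g(a) - \bar{r}(\bar{m})\right| = \left| \tfrac{1}{2}\, h^{\top} \nabla^2 r_g(\bar{m}^{\bigotimes n} + \theta h)\, h \right| \leq \tfrac{M}{2}\, \|h\|_2^2,
\end{align*}
which is precisely $O\!\left(\|a - \bar{m}^{\bigotimes n}\|_2^2\right)$; taking $\delta := M/2$ and using $\|h\|_2 \leq c_{\bar{m},r_g}$ gives the stated bound $\left| r_g(a) - \bar{r}(\bar{m})\right| \leq \delta\, c_{\bar{m},r_g}^2$ (the norm on the scalar $r_g(a)-\bar r(\bar m)$ being just its absolute value).

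The step I expect to be the genuine obstacle — everything else being a direct consequence of Result~1 — is making the constant $\delta$ honest, i.e.\ securing the uniform Hessian bound $M$. Mere \emph{pointwise} twice-differentiability does not force the second derivatives to be bounded, so the clean route is to interpret A1 as continuity of the second derivatives on the small ball, whereupon compactness supplies $M<\infty$; this is what the hypothesis of a \emph{small neighborhood} is implicitly buying. If one insists on only pointwise twice-differentiability, one would instead use the expansion with a Peano ($o(\|h\|_2^2)$) remainder, which still delivers the $O$-statement but not the explicit bound with an a priori $\delta$.
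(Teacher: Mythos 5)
Your proof is correct and establishes exactly what Result~\ref{approxt1} asserts, but it takes a genuinely different route from the paper. The paper obtains Result~\ref{approxt1} as a corollary of Result~\ref{thmmain}, whose proof pushes indistinguishability into the \emph{second-order} term instead of bounding it generically: writing $h_j=a_j-\bar{m}$ as you do, assumption A0 forces the Hessian at the symmetric point to have only two distinct entries ($\partial^2_{a_ia_j}r_g(\bar{m}^{\bigotimes n})=\partial^2_{a_1a_2}r_g(\bar{m}^{\bigotimes n})$ for all $i\neq j$, and all diagonal entries equal to $\partial^2_{a_1a_1}r_g(\bar{m}^{\bigotimes n})$), and since $\sum_j h_j=0$ implies $\sum_{j\neq j'}h_jh_{j'}=-\sum_k h_k^2$, the quadratic form collapses to $\tfrac{1}{2}\bigl(\partial^2_{a_1a_1}r_g-\partial^2_{a_1a_2}r_g\bigr)\sum_k h_k^2$; the cross-derivative is then eliminated through the diagonal identity $r_g(\bar{m}^{\bigotimes n}+\epsilon\ind)=\bar{r}(\bar{m}+\epsilon)$, which gives $\bar{r}''(\bar{m})=n(n-1)\,\partial^2_{a_1a_2}r_g+n\,\partial^2_{a_1a_1}r_g$ and hence the explicit constant $\delta_{\bar{m},\bar{r}}$. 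Comparing the two: your Lagrange-remainder argument is the more careful one on the remainder side (the paper evaluates the Hessian only at the center $\bar{m}^{\bigotimes n}$, so strictly it controls the Peano leading term, and your closing caveat — that pointwise twice differentiability yields only $o(\|h\|_2^2)$, while an honest uniform $\delta$ needs $C^2$ plus compactness — applies to the paper's proof with equal force); on the other hand, your constant $\delta=M/2$ requires a uniform bound on the \emph{full} Hessian over a ball, i.e., knowledge of $r_g$ off the symmetric diagonal, whereas the paper's constant is computable from $\bar{r}''(\bar{m})$ and the single derivative $\partial^2_{a_1a_1}r_g(\bar{m}^{\bigotimes n})$ alone — precisely the feature exploited in Remark~1 and in the queueing application of Subsection~\ref{queue example}, where $r_g$ is not available in closed form away from symmetric arguments.
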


The proof of result \ref{approxt1} follows from the following result \ref{thmmain} which gives the explicit error bound:
\begin{thm}\label{thmmain}
Assume that $A0-A1$ hold. Then, the explicit error bound for arbitrary number of players is $r_g(a)-\bar{r}(\bar{m})$ is in order of $$\delta_{\bar{m},\bar{r}} \sum_{j=1}^n (a_j-\bar{m})^2,$$ where $$\delta_{\bar{m},\bar{r}}=|\frac{n}{2(n-1)}\left( -\frac{1}{n^2}\bar{r}''(\bar{m})+\partial^2_{a_1a_1}r_g(\bar{m}^{\bigotimes n})\right)| $$
\end{thm}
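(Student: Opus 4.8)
The plan is to Taylor-expand $r_g$ to second order about the symmetric point $\bar{m}^{\bigotimes n}$ and then exploit the three structural facts already established in the first Result. First I would set $x_j := a_j-\bar{m}$, so that by the definition of $\bar{m}$ in (\ref{meanaction}) the deviations obey the linear constraint $\sum_{j=1}^n x_j = 0$. Taylor's theorem with remainder then gives
\begin{align*}
r_g(a) - \bar{r}(\bar{m}) &= \sum_{j=1}^n x_j\, \partial_{a_j} r_g(\bar{m}^{\bigotimes n}) \\
&\quad + \frac{1}{2}\sum_{i,j} x_i x_j\, \partial^2_{a_i a_j} r_g(\bar{m}^{\bigotimes n}) + O\!\left(\|x\|_2^3\right),
\end{align*}
and by the first Result the linear term vanishes identically: since $\partial_{a_j} r_g(\bar{m}^{\bigotimes n})$ does not depend on $j$, the sum factors as $\partial_{a_1} r_g(\bar{m}^{\bigotimes n})\sum_{j} x_j = 0$.

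Next I would evaluate the surviving quadratic form. By the first Result the Hessian at $\bar{m}^{\bigotimes n}$ has only two distinct entries: a common diagonal value $D := \partial^2_{a_1 a_1} r_g(\bar{m}^{\bigotimes n})$ and a common off-diagonal value $C := \partial^2_{a_1 a_2} r_g(\bar{m}^{\bigotimes n})$, so that
$$
\sum_{i,j} x_i x_j\, \partial^2_{a_i a_j} r_g(\bar{m}^{\bigotimes n}) = D\sum_j x_j^2 + C\sum_{i\neq j} x_i x_j.
$$
Here the constraint $\sum_j x_j = 0$ does the essential work: squaring it gives $\sum_{i\neq j} x_i x_j = -\sum_j x_j^2$, so the quadratic form collapses to $(D-C)\sum_j x_j^2$ and
$$
r_g(a) - \bar{r}(\bar{m}) = \tfrac12(D-C)\sum_{j=1}^n (a_j-\bar{m})^2 + O\!\left(\|a - \bar{m}^{\bigotimes n}\|_2^3\right).
$$

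The final step is to identify $\tfrac12(D-C)$ with the stated constant. Differentiating $\bar{r}(t) = r_g(t,\ldots,t)$ twice by the chain rule and evaluating at $t=\bar{m}$ yields $\bar{r}''(\bar{m}) = nD + n(n-1)C$, there being $n$ diagonal and $n(n-1)$ off-diagonal contributions. Solving this relation for $C$ and substituting into $D-C$ gives
$$
\tfrac12(D-C) = \frac{n}{2(n-1)}\left(-\frac{1}{n^2}\bar{r}''(\bar{m}) + \partial^2_{a_1 a_1} r_g(\bar{m}^{\bigotimes n})\right),
$$
whose absolute value is exactly $\delta_{\bar{m},\bar{r}}$, so the magnitude of the leading quadratic term is precisely $\delta_{\bar{m},\bar{r}}\sum_{j=1}^n (a_j-\bar{m})^2$.

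I expect the main obstacle to be bookkeeping rather than conceptual: one must get the chain-rule count $\bar{r}'' = nD + n(n-1)C$ exactly right, and, to justify the phrase \emph{in order of}, confirm that within the neighborhood $\|a-\bar{m}^{\bigotimes n}\|_2 \le c_{\bar{m},r_g}$ the cubic remainder is dominated by the quadratic leading term. The conceptual heart, by contrast, is the simple observation that the mean constraint $\sum_j x_j = 0$ confines $x$ to the hyperplane orthogonal to the all-ones vector, on which the structured Hessian $(D-C)I + CJ$ acts merely as multiplication by the eigenvalue $D-C$; everything else is a matter of expressing that eigenvalue through the scalar quantities $\bar{r}''(\bar{m})$ and $\partial^2_{a_1 a_1} r_g(\bar{m}^{\bigotimes n})$.
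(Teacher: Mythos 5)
Your proof is correct and follows essentially the same route as the paper: Taylor expansion about $\bar{m}^{\bigotimes n}$, cancellation of the linear term via indistinguishability and $\sum_j (a_j-\bar{m})=0$, collapse of the two-valued Hessian using $\sum_{i\neq j}x_ix_j=-\sum_j x_j^2$, and elimination of the off-diagonal entry through $\bar{r}''(\bar{m})=n\,\partial^2_{a_1a_1}r_g+n(n-1)\,\partial^2_{a_1a_2}r_g$ (the paper obtains this same identity by matching $\epsilon^2$-coefficients in $r_g(\bar{m}^{\bigotimes n}+\epsilon\ind)=\bar{r}(\bar{m}+\epsilon)$, which is your chain-rule count). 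Your explicit tracking of the cubic remainder and the eigenvalue remark about $(D-C)I+CJ$ on the hyperplane $\sum_j x_j=0$ are slightly more careful framings of the identical argument, not a different approach.
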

\begin{proof}
Let $O_2$ be the second order error term.
\begin{eqnarray}\nonumber
2O_2&=&\sum_{j,j'=1}^n (a_j-\bar{m})(a_{j'}-\bar{m})\partial^2_{a_ja_{j'}}r(\bar{m}^{\bigotimes n})\\ \nonumber
&=& \sum_{j,j', \ j\neq j'}^n (a_j-\bar{m})(a_{j'}-\bar{m})\underbrace{\partial^2_{a_ja_{j'}}r(\bar{m}^{\bigotimes n})}_{=\partial^2_{a_1a_2}r} \nonumber
\\ && + \sum_{k=1}^n(a_{k}-\bar{m})^2\underbrace{\partial^2_{a_ka_{k}}r}_{=\partial^2_{a_1a_1}r}\\ \nonumber
&=& (-\partial^2_{a_1a_2}r+\partial^2_{a_1a_{1}}r)\sum_{k=1}^n(a_{k}-\bar{m})^2
\end{eqnarray}
On the other hand, one has
\begin{eqnarray}
r(\bar{m}^{\bigotimes n}+\epsilon \ind_{})&=&\bar{r}(\bar{m}+\epsilon)
\end{eqnarray}
This implies that
\begin{eqnarray}
2 O_2&=&\epsilon^2 \bar{r}''\\ \nonumber
&=& n(n-1)\epsilon^2 (\partial^2_{a_1a_2}r)+ n\epsilon^2 (\partial^2_{a_1a_1}r)
\end{eqnarray}

Hence, $(\partial^2_{a_1a_2}r)= \frac{\bar{r}''(\bar{m})-n \partial^2_{a_1a_1}r}{n(n-1)}.$

\begin{eqnarray}\nonumber
O_2 &=& \frac{1}{2}\left[ \frac{-\bar{r}''(\bar{m})+n \partial^2_{a_1a_1}r}{n(n-1)} +\partial^2_{a_1a_1}r\right] \sum_{j=1}^n(a_{j}-\bar{m})^2\\ \nonumber
&=& \frac{1}{2}\left[ \frac{-\bar{r}''(\bar{m})+n^2 \partial^2_{a_1a_1}r}{n(n-1)}\right] \sum_{j=1}^n(a_{j}-\bar{m})^2\\ \nonumber
&=& \frac{n}{2(n-1)} \left[ -\frac{1}{n^2}\bar{r}''(\bar{m})+ \partial^2_{a_1a_1}r\right] \sum_{j=1}^n(a_{j}-\bar{m})^2,
\end{eqnarray}
 and we get the exact error for arbitrary system size.
 \end{proof}

\begin{rem}
In order to compute the error bound, one needs only $\bar{r}$, and $\partial^2_{a_1a_1}r_g(\bar{m}^{\bigotimes n}).$ The expression of the function $r_g(a)$ is not required for vector with non-symmetric components. This allows us to provide an approximation result for unknown payoff function as illustrated in Subsection \ref{queue example}.

\end{rem}
\begin{rem}
If $\bar{r}''$ is bounded by $\beta$  and  $\frac{\sum_{j=1}^n (a_j-\bar{m})^2}{n}\leq \sigma$ then
\begin{align}r_g(a)-\bar{r}(\bar{m})&\leq \frac{\beta}{2(n-1)} \frac{\sum_{j=1}^n (a_j-\bar{m})^2}{n}\\
&\leq \frac{\beta \sigma}{2(n-1)}
\leq \frac{\beta \epsilon^2}{2(n-1)} .
\end{align}
In particular, if the finite regime has a solution in a certain sense, that is $\epsilon-$close to a vector with symmetric component then, the non-asymptotic mean-field approach provides automatically an $O(\frac{ \epsilon^2}{2(n-1)})-$solution for any number of players $n\geq 2.$ This is a non-asymptotic result in the sense that it holds for all range of system size $n\geq 2.$ Also, by choosing  $\epsilon=\frac{1}{n^{\alpha}},\ \alpha\geq 1$ one gets an error bound in order of $\frac{1}{2 n^{2\alpha+1}}.$ Note that $\epsilon$ can be very small even if $n$ is not large. For example, with $n=2$ players and $\alpha=10$, one gets an error bound in order of $\frac{1}{2^{22}}$ which is satisfactory in terms of computational accuracy.
\end{rem}

\section{Dynamic setup}\label{secdynamicscale}
In this section we provide very useful approximation results for dynamic interactive systems \cite{marriage}. We consider a finite horizon with length $T\geq 1.$
\subsection{Non-asymptotic mean-field optimization}
Consider a major player who controls also the action to be dictated to $n$ minor entities. Assume that the major player aims to achieve a certain goal with objective function given by
the choice variables that the major player dictates to the minor entities. The objective function is
$R_{g,T}(\tau)=\sum_{t=0}^{T-1} r_g(a_t)$ where  $r_g: \ \mathcal{A}^n \rightarrow \mathcal{A}^n$ satisfies assumptions $A0-1$ and $a_t=(a_{1,t},\ldots,a_{n,t})\in \mathcal{A}^n$ is the choice variable at time $t.$  Let $\bar{m}_t=\frac{1}{n}\sum_{j=1}^n a_{j,t}$ be the  sequence of mean actions and set
$ \bar{r}_T(\bar{m})=\sum_{t=0}^{T-1}  \bar{r}(\bar{m}_t).$
\begin{thm} 
An explicit error bound for  $R_{g,T}(\tau)-\bar{r}_T(\bar{m})$ with arbitrary number of minor entities is given  by $$\delta_{T,\bar{m},\bar{r}} \parallel a-\bar{m}\parallel_{l^{2}_T}^2,$$ where
$$\delta_{T,\bar{m},\bar{r}}=\sup_{t\in \mathcal{T}}
|\frac{n}{2(n-1)}\left( -\frac{1}{n^2}\bar{r}''(\bar{m}_t)+
\partial^2_{a_1a_1}r_g(\bar{m}^{\bigotimes n}_t)\right)|
$$
and $l^{2}_T=\{(x_t)_{t\leq T-1} \ | \ \sum_{t=0}^{T-1} |x_t|^2 < +\infty \},$ and $\| a-\bar{m}\|_{l^{2}_T}=\sum_{t=0}^{T-1} |a_t-\bar{m}_t|^2  $
\end{thm}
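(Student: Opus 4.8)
The plan is to reduce the dynamic estimate to a stagewise application of the static bound in Result~\ref{thmmain}. Because the running payoff is additive in time, $R_{g,T}(\tau)=\sum_{t=0}^{T-1}r_g(a_t)$, and the target splits the same way, $\bar{r}_T(\bar{m})=\sum_{t=0}^{T-1}\bar{r}(\bar{m}_t)$, the total error decomposes as
$$
R_{g,T}(\tau)-\bar{r}_T(\bar{m})=\sum_{t=0}^{T-1}\bigl[r_g(a_t)-\bar{r}(\bar{m}_t)\bigr].
$$
First I would observe that at each fixed time $t$ the single-stage payoff is the very function $r_g$ analyzed in the static section, and it continues to satisfy $A0$ and $A1$, since indistinguishability and smoothness are properties of $r_g$ alone and carry no dependence on $t$. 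Thus, for each $t$, the configuration $a_t=(a_{1,t},\ldots,a_{n,t})$ plays the role of the vector $a$, while its own empirical mean $\bar{m}_t$ plays the role of $\bar{m}$.

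Next I would invoke Result~\ref{thmmain} on each time slice. This produces, for every $t\in\mathcal{T}$, the explicit per-period error
$$
r_g(a_t)-\bar{r}(\bar{m}_t)=\delta_{\bar{m}_t,\bar{r}}\sum_{j=1}^n(a_{j,t}-\bar{m}_t)^2,\qquad
\delta_{\bar{m}_t,\bar{r}}=\Bigl|\tfrac{n}{2(n-1)}\bigl(-\tfrac{1}{n^2}\bar{r}''(\bar{m}_t)+\partial^2_{a_1a_1}r_g(\bar{m}_t^{\bigotimes n})\bigr)\Bigr|,
$$
where the inner sum is exactly $|a_t-\bar{m}_t|^2$ in the notation of the statement. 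The only change from the static setting is that both the empirical mean and the coefficient are now indexed by $t$.

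Then the estimates combine by summing over $t$ and factoring out a single uniform constant. Bounding each time-dependent coefficient by its supremum over the finite horizon, $\delta_{\bar{m}_t,\bar{r}}\le\sup_{s\in\mathcal{T}}\delta_{\bar{m}_s,\bar{r}}=\delta_{T,\bar{m},\bar{r}}$, I would obtain
$$
\bigl|R_{g,T}(\tau)-\bar{r}_T(\bar{m})\bigr|\le\delta_{T,\bar{m},\bar{r}}\sum_{t=0}^{T-1}\sum_{j=1}^n(a_{j,t}-\bar{m}_t)^2=\delta_{T,\bar{m},\bar{r}}\,\|a-\bar{m}\|_{l^2_T}^2,
$$
which is the asserted bound once one reads $\|a-\bar{m}\|_{l^2_T}^2=\sum_{t=0}^{T-1}|a_t-\bar{m}_t|^2$. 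The supremum defining $\delta_{T,\bar{m},\bar{r}}$ is finite because $T$ is finite and each coefficient is finite by the smoothness in $A1$, so pulling it out of the sum is harmless.

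I expect the only delicate point to be bookkeeping rather than genuine mathematics: I must ensure that the per-period neighborhood hypothesis underlying Result~\ref{thmmain}, namely that $a_t$ stays in a small neighborhood of $\bar{m}_t^{\bigotimes n}$, holds uniformly along the trajectory, so that the second-order expansion is simultaneously valid at every $t\in\mathcal{T}$. Apart from this uniformity check and the finiteness of the supremum, there is no real obstacle; the dynamic estimate is precisely the static one applied stagewise and then aggregated, which is why the horizon enters only through the additive sum and through the supremal constant $\delta_{T,\bar{m},\bar{r}}$.
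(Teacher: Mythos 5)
Your proposal is correct and takes essentially the same route as the paper: the paper proves this theorem by deferring to Result \ref{thmmain2}, whose argument is precisely your stagewise application of Result \ref{thmmain} to $r_g$ at each $t\in\mathcal{T}$, followed by summation over the horizon and extraction of the uniform constant $\sup_{t\in\mathcal{T}}\delta_{\bar{m}_t,\bar{r}}$. The only differences are cosmetic: the paper runs the argument in the more general stochastic-game setting (with states and terminal payoff) and, like you, glosses over the fact that the per-period identity holds only up to $o\left(|a_t-\bar{m}_t|^2\right)$ remainder terms under the uniform-neighborhood hypothesis you correctly flag.
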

See Result \ref{thmmain2} below for a proof.
\subsection{Non-asymptotic mean-field stochastic games}
Consider a stochastic game~\cite{shapley,gamenets} with $n$ minor players and one major player (designer). Time is discrete. Time space is $\mathcal{T}=\{0,1,\ldots,T-1\}$ where $T\geq 1.$ Each player $j$  has its individual state $s_{j,t},\ t\in\mathcal{T}$ which evolves according to a Markovian process. The action space a player depends on its current space, $\mathcal{A}(s_{j,t})\subset \mathbb{R}.$ A  pure strategy of  player $j$ at time $t$ is a mapping from history $\mathcal{H}_{j,t}$ up to $t$ to the current action space $\mathcal{A}(s_{j,t}).$  Denote by $\tau_j=(\tau_{j,t})_t$ the strategy of player $j.$  For $h_{j,t}\in \mathcal{H}_{j,t},\  \tau_{j,t}(h_{j,t})=a_{j,t}\in \mathcal{A}(s_{j,t}).$ The instantaneous payoff function of player $j$ is $r(s_{j,t},a_{j,t},m_t)$ where $m_t$ is the  state-action distribution, which satisfies the indistinguishability property with the respect to the other players. We assume that  the payoff function $r$ is smooth. 

The long-term payoff of player $j$ is
$$R_{j,T}(s_0,\tau)=\mathbb{E}\left[ \bar{g}(s_T)+\sum_{t\in \mathcal{T}}r(s_{j,t},a_{j,t},m_t)\ | \ (s_{j,0}, \ \tau_j)_{j\in\mathcal{N}}\right]
$$ where  $\bar{g}$ is the terminal payoff.
A strategy profile $\tau^*$ is a (Nash) equilibrium if no player can improve her payoff by unilateral deviation, i.e., for every player $j,$
{ $$R_{j,T}(s_0,\tau^*)=\max_{\tau_j}R_{j,T}(s_0,\tau_j,\tau_{-j}^*)$$
}

Let $V_{j,T}(s_0,\tau_{-j})$ be the value function of the bidder $j$ , i.e., it is the supremum, over all
possible bidding strategies, of the expectation of the payoff $R_{j,T}$ starting from
an initial state $s_0$ when the other players strategy profile is $\tau_{-j}:$ 

$$V_{j,T}(s_0,\tau_{-j})=\max_{\tau_j} [R_{j,T}(s_0,\tau) \ | \  \tau_{-j}, s_0].$$

Taking the expectation over the other players state, the recursive Bellman-Kolmogorov equation is given by
$$\begin{array}{c}
V_{j,t}(s_{j})=\sup_{a_j}\left[ r(s_j,a_j,m_t)+\mathbb{E}_{s'_j} V_{j,t+1}(s'_j\ | s_j,a_j,\tau) \right],\\
m_{t+1}\sim \mathbb{P}(.| \ m_t,\tau_t)\\
s_{j,t+1}\sim q(.|s_{j,t},a_{j,t},m_t,\tau_t)=\mathbb{P}(.|\ s_{j,t},a_{j,t},m_t,\tau_t)
\end{array}
$$
where $q$ and $\mathbb{P}$ define the transition probabilities between the states.
\begin{thm}
\label{thmmain2}
Let $a_{j,t}(h_{j,t})=\bar{m}_t(h_{j,t})+\epsilon \gamma_{j,t}(h_{j,t}),$ where
$$ \gamma_{j,t}(h_{j,t})=\frac{a_{j,t}(h_{j,t})-\bar{m}_t(h_{j,t} )}{\epsilon},$$
$$\bar{m}_t(h_{j,t})=\frac{1}{n}\sum_{j'=1}^n a_{j',t}(h_{j',t}),$$ and
$$\epsilon=\max_j \sup_{t\in \mathcal{T}}\sup_{h_{j,t}\in \mathcal{H}_{j,t}}\  \| a_{j,t}(h_{j,t})-\bar{m}_t(h_{j,t} ) \|,$$
Assume the state transition $q$ is continuous.
Then the total term payoff $R_g(s_0;\tau_1,\ldots, \tau_n)$ is in order of $R_g(s_0;\bar{m},\ldots,\bar{m})+O(\epsilon^2)$ for any $n\geq 2.$
\end{thm}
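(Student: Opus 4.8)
The plan is to reduce this dynamic stochastic statement to a single-parameter Taylor expansion in $\epsilon$ and to show that the first-order term vanishes exactly, just as in the static Result \ref{thmmain}, leaving a remainder of order $\epsilon^2$. I would define $F(\epsilon):=R_g(s_0;\tau(\epsilon))$, where the profile $\tau(\epsilon)$ prescribes the actions $a_{j,t}(h_{j,t})=\bar{m}_t(h_{j,t})+\epsilon\,\gamma_{j,t}(h_{j,t})$, so that $F(0)=R_g(s_0;\bar{m},\ldots,\bar{m})$ is the fully symmetric profile. It then suffices to prove that $F$ is twice differentiable near $0$ with $F'(0)=0$ and $F''$ bounded on the $\epsilon$-neighborhood, since then $F(\epsilon)-F(0)=\tfrac12 F''(\xi)\epsilon^2=O(\epsilon^2)$ for every $n\ge 2$.

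The first ingredient is the identity $\sum_{j=1}^n\gamma_{j,t}(h_{j,t})=0$ at every time $t$ and history, which follows at once from $\bar{m}_t=\frac1n\sum_{j}a_{j,t}$ together with the defining relation $a_{j,t}=\bar{m}_t+\epsilon\gamma_{j,t}$. This is the dynamic analogue of the constraint $\sum_j(a_j-\bar{m})=0$ exploited in the static proof. I would then differentiate $F$ under the expectation by the chain rule, splitting $F'(0)$ into a \emph{direct} contribution, in which $\epsilon$ enters only through the action arguments of the stage global payoff $r_g$ and the terminal payoff $\bar{g}$, and an \emph{indirect} contribution, in which $\epsilon$ enters through the laws of the state trajectory via the transition kernels $q$ and $\mathbb{P}$.

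For the direct part I would invoke the static label-independence result established earlier in this section: at the symmetric point $\bar{m}^{\bigotimes n}$ the gradient of an indistinguishable payoff is independent of the player label, so $\partial_{a_{j,t}}r_g(\bar{m}_t^{\bigotimes n})=\partial_{a_{1,t}}r_g(\bar{m}_t^{\bigotimes n})$ for all $j$. Hence the direct first-order term at each $t$ factors as $\partial_{a_{1,t}}r_g\cdot\sum_{j}\gamma_{j,t}=0$, and the same cancellation applies to $\bar{g}(s_T)$. This reproduces exactly the mechanism by which the first-order Taylor term was annihilated in Result \ref{thmmain}.

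The main obstacle is the indirect contribution, which has no counterpart in the static case: a perturbation of the actions at time $t$ reshapes the distribution of states at all later times, and one must show that this propagated first-order effect also vanishes. Here the continuity (and one-sided differentiability) of $q$ is essential. I would linearize $q$ and $\mathbb{P}$ in $\epsilon$ about the symmetric trajectory and propagate the resulting sensitivities forward through the Bellman--Kolmogorov recursion, or equivalently run a backward adjoint recursion mirroring those equations. At each stage the first-order sensitivity of the symmetric kernel is a label-independent object multiplied by $\gamma_{j,t}$; summing over $j$ and using $\sum_j\gamma_{j,t}=0$ kills it stage by stage, so the accumulated indirect first-order term is again zero. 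Continuity of $q$ on the compact $\epsilon$-neighborhood, together with the smoothness assumption A1 on $r_g$ and $\bar{g}$, then bounds $F''$ uniformly and delivers the $O(\epsilon^2)$ remainder; summed over the finite horizon this is precisely the dynamic restatement of the $l^2_T$ error bound announced just before Result \ref{thmmain2}.
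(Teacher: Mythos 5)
Your route is genuinely different from the paper's, and it is worth being precise about what the paper actually does. The paper never forms the one-parameter map $F(\epsilon)$, never differentiates under the expectation, and runs no adjoint or sensitivity recursion through the kernels $q$ and $\mathbb{P}$. Instead it imports the static Result~\ref{thmmain} wholesale, stage by stage: at each $t$ it freezes the state and bounds $\| r_g(s_t,a_t,m_t)-\bar{r}(\bar{s}_t,m_t)\|$ by $\delta_{t,\bar{m}_t,\bar{r}}(s_t)\,\|a_t-\bar{m}_t\|^2$, telescopes the difference $R_g(s_0;\tau_1,\ldots,\tau_n)-R_g(s_0;\bar{m},\ldots,\bar{m})$ into these stage terms plus a separate terminal mismatch $\delta_T=\|\bar{g}(s_T)-\bar{g}(\bar{s}_T)\|$, and uses the continuity of $q$ only to replace the state-dependent curvature coefficient by its supremum $\bar{\delta}_{t,\bar{m}_t,\bar{r}}=\sup_s\delta_{t,\bar{m}_t,\bar{r}}(s)$, concluding the bound $\epsilon^2\sup_{t}\bar{\delta}_{t,\bar{m}_t,\bar{r}}$ ``for the same state'' and then repeating the estimate along the value-iteration recursion at a symmetric equilibrium. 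Your direct-term cancellation via $\sum_j\gamma_{j,t}=0$ is exactly the mechanism of the static proof, so that part matches; but your global Taylor expansion with a direct/indirect decomposition is a reconstruction the paper does not attempt, and it buys something the paper lacks, namely an explicit accounting of how the perturbation propagates through the state law.

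That said, the pivotal step of your indirect argument does not hold as stated. You claim that along the symmetric baseline the first-order sensitivity of the kernel is ``a label-independent object multiplied by $\gamma_{j,t}$,'' so that $\sum_j\gamma_{j,t}=0$ kills it stage by stage. Label-independence of these adjoint coefficients is a pointwise statement that requires the realized states of the players to coincide; under the stochastic transitions $s_{j,t+1}\sim q(\cdot\,|\,s_{j,t},a_{j,t},m_t,\tau_t)$ with idiosyncratic noise, the states $s_{j,t}$ differ across players even under the fully symmetric profile, so the coefficient $c_{j,t}$ multiplying $\gamma_{j,t}$ depends on player $j$'s own history. Then $\sum_j c_{j,t}\gamma_{j,t}=\sum_j (c_{j,t}-\bar{c}_t)\gamma_{j,t}$ is a covariance-type term which need not vanish, and exchangeability in distribution does not rescue it, because $c_{j,t}$ and $\gamma_{j,t}(h_{j,t})$ are correlated through $h_{j,t}$; the same objection in fact touches your direct term, since $\partial_{a_{j,t}}r_g$ evaluated at symmetric actions still depends on $s_{j,t}$. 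Your cancellation is valid under deterministic or common-noise state dynamics, or when the state spread along the baseline is itself $O(\epsilon)$, but in general it needs such an extra hypothesis. To be fair, the paper does not resolve this point either: it compares stage payoffs at a common state, bounds $\delta_T$ without showing it is $O(\epsilon^2)$, and leans on the continuity of $q$ only for the supremum of the curvature coefficient. So you have correctly identified the one genuine obstacle of the dynamic case --- which the paper sidesteps rather than closes --- but your proposed resolution of it is itself incomplete at exactly that step.
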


Due to the indistinguishability property one can  use result \ref{thmmain} to $r_g$ at each time $t$. Any time  $t\in \mathcal{T},$ $\|  r_g(s_t,a_t)-r_g(s_t,\bar{m}_t) \| $ is 
 bounded by $\delta_{t,\bar{m}_t,\bar{r}}(s_t)\| a_t-\bar{m}_t \|^2.$

Based on this we derive a bound on  the payoff $R_g.$  
\begin{eqnarray}
& R_g(s_0;\tau_1,\ldots, \tau_n)-R_g(s_0;\bar{m},\ldots,\bar{m})\\ \nonumber
&= g(s_T)-g(\bar{s}_T)+ \\ \nonumber &\sum_{t\in \mathcal{T}}r_g(s_{t},a_{t},m_t)-r_g(\bar{s}_{t},\bar{a}_{t},\ldots, \bar{a}_{t},m_t)\\ \nonumber
&= g(s_T)-g(\bar{s}_T)+  \\ \nonumber & \sum_{t\in \mathcal{T}}r_g(s_{t},a_{t},m_t)-r_g(\bar{s}_{t},{m}_{t},\ldots,{m}_{t},m_t)\\  \nonumber
&= g(s_T)-g(\bar{s}_T)+\sum_{t\in \mathcal{T}}r_g(s_{t},a_{t},m_t)-\bar{r}(\bar{s}_{t},{m}_{t})\\  
&= g(s_T)-g(\bar{s}_T)+\sum_{t\in \mathcal{T}}[\epsilon_t^2\bar{\delta}_{t,\bar{m}_t,\bar{r}}+o(\epsilon_t^2)].
\end{eqnarray}
where $\bar{s}_T$ is the final state when the average action $\bar{a}_{t}$ is played by all the minor players and 
$$\delta_{\bar{m},\bar{r}}(s_t)=|\frac{n}{2(n-1)}\left( -\frac{1}{n^2}\bar{r}''(\bar{m})+\partial^2_{a_1a_1}r_g(\bar{m}^{\bigotimes n})\right)|$$
and $\epsilon_t= \| a_t -m_t \|.$ 

Taking the absolute value, one obtains the following inequality:
 
 \begin{eqnarray}
 &\| R_g(s_0;\tau_1,\ldots, \tau_n)-R_g(s_0;\bar{m},\ldots,\bar{m}) \| \\
 &= \| \bar{g}(s_T)-\bar{g}(\bar{s}_T) \|+ \sum_{t\in \mathcal{T}}[\epsilon_t^2|\bar{\delta}_{t,\bar{m}_t,\bar{r}}|+o(\epsilon_t^2)]\\
 & \leq \delta_T+\sum_{t\in \mathcal{T}}[\epsilon_t^2\bar{\delta}_{t,\bar{m}_t,\bar{r}}+o(\epsilon_t^2)]  
 \end{eqnarray} where $\delta_T=\| \bar{g}(s_T)-\bar{g}(\bar{s}_T) \|.$  
 
 Now, a small changes in the action may change the state, and hence the term $\delta_{t,\bar{m}_t,\bar{r}}(s_t)$ is changed. Using the continuity of the state transition $q,$
  we take a uniform bound by considering the supremum: $\bar{\delta}_{t,\bar{m}_t,\bar{r}}=\sup_{s}\delta_{t,\bar{m}_t,\bar{r}}(s).$
  \begin{eqnarray}
 &\| R_g(s_0;\tau_1,\ldots, \tau_n)-R_g(s_0;\bar{m},\ldots,\bar{m}) \| \\
 & \leq \delta_T+[\sup_t \bar{\delta}_{t,\bar{m}_t,\bar{r}}]\sum_{t\in \mathcal{T}}\epsilon_t^2+o(\epsilon_t^2) 
 \end{eqnarray}
 For the same state, the error bound is $[\sup_t \bar{\delta}_{t,\bar{m}_t,\bar{r}}]\sum_{t\in \mathcal{T}}\epsilon_t^2.$ 
 Thus, the global error is bounded by $\epsilon^2 \left(\sup_{t\in \mathcal{T}}\bar{\delta}_{t,\bar{m}_t,\bar{r}}\right).$ 
 
 Since the  above inequality holds for a generic symmetric vector which is the average action, it is in particular true when evaluated at  a symmetric Nash equilibrium actions (if it exists) of the minor players.\footnote{The existence of symmetric equilibria in symmetric  stochastic games uses standard point fixed existence condition.} In that case, we recursively use the value iteration relation for each minor player
 $V_{j,t}(s_{j})=\sup_{a_j}\left[ r(s_j,a_j,m_t)+\mathbb{E}_{s'_j} V_{j,t+1}(s'_j\ | s_j,a_j,\tau) \right].$ 

Summing up over $j$ at the optimal strategies of the minor players (if any) yields 
$$V_{t}(s_{t})= r_g(s_t,a^*_t,m^*_t)+\mathbb{E}_{s'} V_{t+1}(s'\ | s_t,a^*_t,\tau^*).$$ Applying the local error bound $ [\epsilon_t^2\bar{\delta}_{t,\bar{m}_t,\bar{r}}(s_t)+o(\epsilon_t^2)]$ and iterating $T$ times 
gives
$$ R_g(s_0;\tau_1^*,\ldots, \tau^*_n)-R_g(s_0;\bar{m}^*,\ldots,\bar{m}^*) =O(\epsilon^2). $$

\section{Applications} \label{basicapplication}

\subsection{ Collaborative effort}
Consider $n$ players. Each player can choose an action in the closed interval $[0,1].$
The geometric aggregate given by
$\left( \prod_{j'=1}^n a_{j'}\right)^{\frac{1}{n}}=e^{\frac{1}{n}\sum_{j'=1}^n ln(a_{j'}) }$ over $[0,1]^n.$ In order to preserve the differentiability at the origin we consider the payoff as $r_j(a)=\prod_{j'=1}^n a_{j'}.$
Then, the  following statements hold:
{
\begin{itemize}
\item We observe that the payoff functions are indistinguishable. 
The payoff functions satisfy $r_j(a)=r_i(a)=r(a)=\left( \prod_{j'=1}^n a_{j'}\right),$ which remains the same by interchanging the indexes.

\item The pure strategies $0-$effort and $1-$effort are equilibria. Moreover, $1$ is a strong-equilibrium (resilience to any deviation of any size $k>1$). Indeed,

If all the players do the maximum effort, i.e., $a_j=1,$ then every player receives the maximum payoff $r_j=1$ and no player has incentive to deviate. This is clearly a pure Nash equilibrium. Suppose now that a subset of players (a coalition) deviates and jointly chooses an action that is different than $(1,\ldots,1)$, then the payoff of all the players is lower than $1.$ In particular, the members of the coalition gets a lower payoff than $1$. Since this analysis holds for any coalition of any size, the action profile $(1,\ldots,1)$ is a Strong Nash equilibrium.

\item We define the analogue of the price of anarchy (PoA) for payoff-maximization problem as the  ratio between the worse equilibrium payoff and the social optimum.  
If one of the players does $0-$effort (no effort) then the payoff of  every player will be zero and no player can improve its payoff by unilateral deviation.  This means that $(0,\ldots,0)$ is a pure Nash equilibrium.
Note that the equilibrium payoff at this equilibrium is the lowest possible payoff that a player can receive, i.e., $(0,\ldots,0)$ is the worse equilibrium in terms of payoffs.
 Hence, $PoA=0$ and the ratio between the  global optimum and the equilibrium payoff is $\frac{1}{0},$ which is infinite. 

 Clearly, the price of stability (the ratio between the best equilibrium payoff and the social optimum) is $ PoS=1 $.

\item We say that a pure  symmetric strategy $a^*$ is an evolutionarily stable strategy~\cite{price73} if it is resilient by small perturbation as follows:
For every $a\neq a^*,$ there exists an $\epsilon_{a}>0$ such that 

$ r((1-\epsilon)a^*+\epsilon a,\ldots,(1-\epsilon)a^*+\epsilon a, a^*, (1-\epsilon)a^*+\epsilon a, \ldots, (1-\epsilon)a^*+\epsilon a) >  $

$r((1-\epsilon)a^*+\epsilon a,\ldots,(1-\epsilon)a^*+\epsilon a, a, (1-\epsilon)a^*+\epsilon a, \ldots, (1-\epsilon)a^*+\epsilon a) $

for all  $\epsilon\in(0,\epsilon_{a}).$

We now show that  the pure strategy  $0-$effort (i.e., the action profile $(0,\ldots,0)$) is not  an Evolutionarily Stable Strategy.  Indeed, if  $a^*=0,$ the left hand side of the above inequality is $0$ which is not strictly greater than $a(\epsilon a)^{n-1}>0.$
\end{itemize}

The  non-asymptotic mean-field approach allows us to link the geometric mean with the arithmetic mean action.
We remark that the geometric mean  as a payoff, satisfies the indistinguishability property and it is smooth in the positive orthant. Here $\bar{r}$ is the identity function because when the all the actions are identical, the geometric mean coincides with the arithmetic mean. In order to illustrate the error bound in the non-asymptotic mean field let consider two decision-makers. We expand the payoff for an asymmetric input level of size $\epsilon.$ Let $\bar{m}=\frac{a_1+a_2}{2},\ \gamma_i=\frac{a_i-\bar{m}}{\epsilon}.$

\begin{eqnarray} r(a_1,a_2)&=& a_1a_2\\
&=& \bar{m}^2+\epsilon \underbrace{(\gamma_1+\gamma_2)}_{=0}+\epsilon^2\underbrace{\gamma_1 \gamma_2}_{abs. \leq 1}.
\end{eqnarray} 
Here, $\bar{r}(x)=x^2.$
Thus, $r(a)-\bar{r}(\bar{m})=r(a)-\bar{m}^2 =\epsilon^2\gamma_1 \gamma_2\leq \epsilon^2.$

In particular, If $a_1=1$ and $a_2=(1-\epsilon)^2<1.$
then $r=(1-\epsilon)^2$ and $\bar{m}=1-\epsilon+\frac{\epsilon^2}{2}.$ One has,
$$
\bar{m}^2-r(a)=\frac{\epsilon^4}{4}+(1-\epsilon)\epsilon^2< \epsilon^2.
$$

Now, if $a$ is near zero, i.e., $a_1=\eta$ and $a_2=\epsilon+\eta$ with $\eta\leq \epsilon/2.$ then
$r(a)=\eta^2+\epsilon\eta$ and $\bar{m}=\eta+\frac{\epsilon}{2}.$  Thus, 
$$  \bar{m}^2 - r(a)=\frac{\epsilon^2}{4}+\eta\epsilon \leq \frac{3}{4}\epsilon^2.$$
}
 The above calculus illustrates that if the system is indistinguishable we can work directly with the mean of the mean-field
with error $O(\epsilon^2)$ where $\epsilon$ captures the asymmetry level of the system.

Next we illustrate the usefulness of our approximation of waiting time in a queueing system with multiple servers.
\subsection{Queueing mean-field games} \label{queue example}
 Consider $n$ servers and a $M/M/n$ system with arrival $\lambda$ and service rate of $\mu_i$ for server $i.$
Assume that the  customers are indistinguishable in terms of performance index. 
 Each customer will be assigned to one of the non-busy servers with  a certain probability,
(if any). If not, the customer joins a
queue and will be in waiting list. Our goal is to investigate the delay, i.e., the propagation delay and  the expected
waiting time (WT) in the queue. Let the expected propagation delay to be $d_p.$  
Using~\cite{reftv0}, we determine the waiting time
$WT(\lambda,\mu_i=\bar m)
$ in the case  of similar service rates $\mu_i=\bar m$.
Let $\rho=\frac{\lambda}{n\bar m}.$ 
{
The transition rate (continuous time) is given by
$$\left\{ \begin{array}{c}
k\geq 1,\ R_{kk}= -[(k-1) \bar m+\lambda],\\
 R_{k,k+1}=\lambda, \\
k\geq 1, \ R_{k,k-1}=(k-1) \bar{m},\\
 \mbox{otherwise} \ R_{kj}=0.
 \end{array} \right.
 $$
 }
The steady states are easily determined by setting $Rx=0.$
 
%


The probability that all servers are busy is
$$C=\frac{\left( \frac{(n\rho)^n}{n!}\right) \left( \frac{1}{1-\rho} \right)}{\sum_{k=0}^{n-1} \frac{(n\rho)^k}{k!} + \left( \frac{(n\rho)^n}{n!} \right) \left( \frac{1}{1-\rho} \right)} .$$

Hence, the waiting response time for the symmetric setup $WT(\lambda,\mu_i=\bar m)=\frac{ C}{n\bar m - \lambda}.$

{
The computation of $WT(\lambda,\mu_1,\ldots,\mu_n)$ in the asymmetric setting is highly complex and is still not well understood. 
The question is to know if non-asymptotic mean field approach can provide a useful approximation of it.
To do so, we check the main assumptions $A0$ and $A1.$ 
 Clearly  WT is regular (in $\mu_1,\ldots, \mu_n$ for small $\lambda << \sum_{j=1}^n \mu_j$) and satisfies the indistinguishability property. Then, using nonasymptotic
 mean-field approach,
 \begin{align}
 WT(\lambda,(\mu_i)_i)
 &=WT(\lambda,\bar{m},\ldots, \bar{m})+O(\epsilon^2)\\
 &=\frac{ C}{n\bar m - \lambda}+O(\epsilon^2)
 \end{align}

\begin{figure}[htb]
  \centering
  \includegraphics[width=9cm,height=6cm]{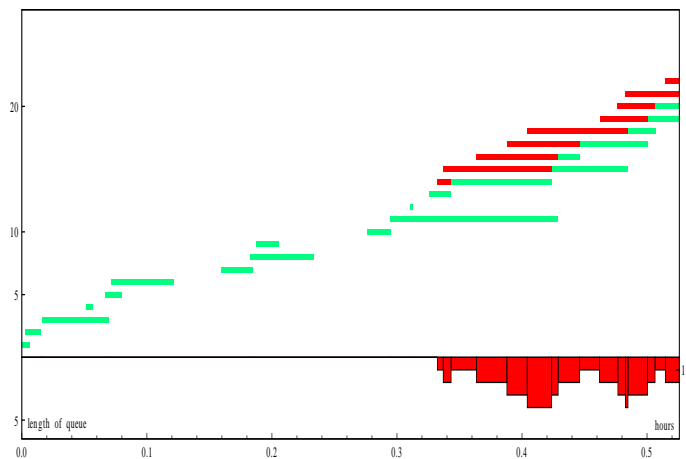}\\
  \caption{Simulation with two servers. The x-axis is the time evolution and the y-axis the queue size.}\label{figmm1simulation}
\end{figure}
In Figure \ref{figmm1simulation}, we observe  the following:

$
\begin{array}{|r|r|r|}
\hline
 \text{} & \text{expected} & \text{observed} \\
\hline
 \text{server utilization factor:  } \rho & 0.830 & 0.650 \\
 \text{ (proportion of time  } &  &  \\
\text{ each server is busy) } &  &  \\
\hline
 \text{probability of no customers }  & 0.091 & 0.150 \\
 \text{in the system}  &  & \\
\hline
 \text{probability that an arriving  }  & 0.760 & 0.460 \\
 \text{ customer has to wait } &  &  \\
\hline
 \text{average number of customers }  & 3.800 & 0.810 \\
 \text{in the queue}  &  &  \\
\hline
 \text{average number of customers } & 5.500 & 2.100 \\
  \text{in the system}  &  &  \\
\text{including the queue} &  &  \\
\hline
\end{array}
$
}
\subsection{ Auction with  asymmetric bidders}
\subsubsection{Static setup}
The theory of auctions as games of incomplete information originated in
1961 in the work of Vickrey.
A seller has an object to sell. She adopted a first-price auction rule.
Consider a first-price auction with    asymmetric bidders. There are $n\geq 2$ bidders for the object.
 Each bidder independently submit a single bid without seeing the others' bids.
 If there is only one bidder with the highest bid, the object is sold to the bidder with biggest bid. The winner pays her bid, that is, the price is  the highest  (or first price bid).
 If there is more than one bidder, the object goes to each of these bidders with equal probability.
 The bidder $v_j$ has a valuation of the object. The random variable $\tilde{v}_j$ has a $C^1-$cumulative distribution function with support $[\underline{v},\bar{v}]$ where  $\underline{v}< \bar{v}.$
 A strategy of bidder $j$ is a mapping from valuation to a bid space: $v_j   \longmapsto b_j(v_j).$
 The risk-neutral payoff of bidder $j$ is   $(v_j-b)\mathbb{P}\left( \max_{j'\neq j} {b}_{j'}(\tilde{v}_{j'}) < b \right).$
Using the independence of the valuation $\tilde{v}_{j'}$, the risk-neutral payoff can written as
$(v_j-b)\prod_{j'\neq j} F_{j'}(b_{j'}^{-1}(b)).$
The information structure of auction game is as follows. Each bidder knows its value, bid but not the valuation of the other bidders. Each bidder knows the valuation cumulative distribution of the others. The structure of the game is common knowledge.
We are interested in the equilibria, equilibrium payoffs and revenue of the seller.
Existence of equilibrium of auction games have been widely studied (\cite{reftv1,lebrun1,lebrun2}).

Clearly, no bidder would bid an amount that is greater than her value because of negative payoff.
By fixing the bidding strategy of the others one has attempted to compute the best response correspondence. Any increase in the bid will decrease the gain but increase the probability of winning. This is a sort of tradeoff between the profit and the probability of winning.

We differentiate the function $b     \longrightarrow (v_j-b)\prod_{j'\neq j} F_{j'}(b_{j'}^{-1}(b)).$

 In order to find an equilibrium one needs  to
 solve $n$ Ordinary Differential Equations (ODEs) with boundary conditions.
\begin{eqnarray} \label{conditiont2}
v'_j(b)=\frac{F_j(v_j(b))}{F'_j(v_j(b))}\left[  \frac{1}{n-1}\sum_{j'=1}^n\frac{1}{v_{j'}(b)-b}-\frac{1}{v_{j}(b)-b}\right]
\end{eqnarray}
\begin{eqnarray}\label{conditiont3}  v_j(\underline{b})=\underline{v},\ v_j(\bar{b})=\bar{v}
\end{eqnarray}  The inverse of the function $v$ is the optimal strategy $b.$
There is no need to mention that this is intractable even with small number of bidders. Even for three bidders we do not understand clearly how the solutions behave in function of $F_j.$

Why this is not a simple ODE problem?

{\bf Non-standard existence theorem is needed:}  We cannot apply the standard local existence and uniqueness theorem to the ODE with initial value (lowest bid)
$ v_j(\underline{b})=\underline{v}$
because by  the right-hand-side terms $\frac{1}{v_{j}(b)-b}$ in the ODEs are unbounded at $\underline{v}.$
In addition, the equilibrium satisfies  $v_j(\bar{b})=\bar{v}$ but the term $\bar{b}$ is unknown.
Due to these difficulties, explicit solutions of (\ref{conditiont2})  and  (\ref{conditiont3}) are not available.

{\bf Non-standard numerical method is needed:}

Since explicit solutions are open issues, one may ask if it is possible to solve the problem numerically. According to the recent work in \cite{dynamict1}, the numerical implementation of the system (\ref{conditiont2}) ,  (\ref{conditiont3})  remains a challenging task.
One of the well-known numerical methods consists to solve to find among the solutions of ODEs together with the initial conditions, that satisfy the highest equilibrium bid constraint. Such an approach is known as {\it forward-shooting method}. However,
 the forward-shooting method of Marshall et al. \cite{marshall} do not converge to the solution due to approximation near $\underline{b},$ with the derivative
$
v'_j(\underline{b})
$
 It has been shown in \cite{dynamict1} that for the special case of power law (i.e. $F(v)=v^{\alpha}$), a dynamical system approach can be used with the change of variable
$$ v_j(b)=b V_j(b),\ b=e^{w},
$$

In the backward approach, one searches for the value of $\bar{b}$ by solving Equation (\ref{conditiont2}) backward in $b$
 subject to the end condition $v_j(\bar{b})=\bar{v}$
and looking for the value of $\bar{b}$ for which the initial value coincides with $\underline{v}.$
However, the standard backward-shooting method is inherently unstable, specially when the bids are near $\underline{v}.$ The authors in \cite{ganishnum} showed that  the  backward-shooting method is unstable even in the symmetric case.

If all the functions $F_j(v)$ are the same, and hence equal to $\bar{m}(v)$ then,
 we know from Vickrey 1961 that  the symmetric equilibrium is $$b(s)=s-\frac{\int_{\underline{v}}^{s} F^{n-1}(x) dx}{F^{n-1}(s)},$$
 which is obtained as follows:

 Instead of $n$ ODEs we have one ODE to solve. The ODE is
\begin{eqnarray}v'(b)(v(b)-b)=\frac{G(v(b))}{G'(v(b))}
\end{eqnarray}
where $G$ the value distribution of the $n-1$ bidders.
Using the bijection function and the fact that $(h^{-1})'(x)=\frac{1}{h'(h^{-1}(x))}$
Hence,
$\frac{1}{h'(x)}(x-h(x))G'(x)=G(x)$ where $h(x)$ is the strategy.
This means that
$xG'=h'G+hG'=(hG)'.$
By simple integration between the minimum value and $v,$ one gets
$$
h(v)G(v)-[hG]_{\underline{v}}=\int_{\underline{v}}^v x G'(x) dx=[xG(x)]^{v}_{\underline{v}}-\int^v_{\underline{v}} G(x)\ dx
$$
Hence, $h(v)=v-\frac{1}{G(v)}\int^v_{\underline{v}} G(x)\ dx$

{\it Nonasymptotic mean field approach provides a useful error bound in this open problem}

For asymmetric distribution  we are able to get a precise error bound when the distribution $F_j$ are close to their arithmetic mean, the equilibrium strategies and payoffs can be approximated in a perturbed range. To do so, we use a non-asymptotic mean field approach over function space.
First remark that the revenue of the seller, satisfies the indistinguishability property, since it is, up to a constant, the integral of the product $\prod_{j\in\mathcal{N}}F_j.$
We rewrite the function $F_j$ as  $F_j(v)=\bar{m}(v)+\epsilon \gamma_j(v),$ where
$ \gamma_j(v)=\frac{F_j(v)-\bar{m}(v )}{\epsilon},$
$\bar{m}(v)=\frac{1}{n}\sum_{j'=1}^n F_{j'}(v)$ and
$\epsilon=\max_j \max_{[\underline{v}, \bar{v}]}\ | F_j(v)-\bar{m}(v ) |,$

Using result \ref{thmmain}, one gets
\begin{itemize}
\item Good approximate of the asymmetric equilibrium strategies,
\item Equilibrium payoff with deviation order of $O(\epsilon^2).$
\end{itemize}

\begin{example}
 Note that the  Optional Second Price  auction is currently used in Doubleclick Ad Exchange. Examples of Ad exchanges are
RightMedia, adBrite, OpenX, and DoubleClick. The idea is described as follows \cite{adx,nisan}:
\begin{itemize}
\item  User $u$ visits the webpage $w$ of publisher $p(w)$ that has, say, a single slot for ads.
\item  Publisher $p(w)$ contacts the exchange E with $(w, P(u), \rho)$ where $\rho$ is the minimum
price p(w) is willing to take for the slot in $w,$ and $P(u)$ is the information about user
$u$ that $P(w)$ shares with $E.$
\item The exchange E contacts ad networks $adn_1,\ldots, adn_m$ with $(E(w), E(u))$, where $E(w)$ is
information about $w$ provided by $E$, and $E(u)$ is the information about $u$ provided by
$E.$  $ E(u)$ may be potentially different from $ P(u)$.
\item  Each ad network $adn_j$ returns $(b_j, d_j) $ on behalf of its customers which are the advertisers; $b_j$
is its bid, that is, the maximum it is willing to pay for the slot in page $w$ and $d_j$
is
the ad it wishes to be shown. Each ad network may have  multiple advertisers. The ad networks may also choose not to return a bid.
\item  Exchange $E$ determines a winner $j^*$
for the ad slot among all $ (b_j, d_j)'$s and its price $c_{j^*}$ satisfying $\rho\leq c_{j^*}\leq b_{j^*}$
via an auction (first or second price).
\item Exchange $E$ returns winning ad $d_{j^*}$
to publisher p(w) and price $c_{j^*}$ to ad network $j^*$
\item The publisher $p(w)$ serves webpage $w$ with ad $d_{j^*}$
 to user $u$ (the
impression of ad $d_{j^*}$).
\end{itemize}
Note that from the click of the user to the impression of the ad $d_{j^*},$ there are many intermediary interactive processes. Auction is one them and an important one because it determines the winner ad network. While it is reasonable to consider large population of users over internet, the number of concurrent ad networks remains finite and there a room for non-asymptotic mean-field analysis for the revenue. As a user may click several times over  webpages, the dynamic auction framework seems more realistic. We examine the dynamic auction  in subsection~\ref{subsecdynamic}.
\end{example}

\subsubsection{Spiteful bidders}
A player might be losing the auction of a long-term project. Yet she
continues to participate in the auction because she wants to minimize the negative payoff on losing by making
her competitor, who would win the auction, pay a high price for the win. This negative dependence of payoff
on others' surplus is referred to as spiteful behavior. Below we show how our nonasymptotic mean-field framework  can be applied to that scenario.

A spiteful player $j$ maximizes the weighted difference of her
own payoff $r_j$ and his competitors' payoffs $r_{j'}$ for all $j'\neq j.$
The payoff of a spiteful player is
\begin{eqnarray}
r_{j,\alpha}:=(1-\alpha_j)r_j-\alpha_j\sum_{j'\neq j}r_{j'}
\end{eqnarray}
Obviously, setting $\alpha_j$
to zero yields a selfishness
 (whose payoff equals his exact profit) whereas $\alpha_j = 1$ defines
a completely malicious player (jammer) whose only goal is to minimize
the profit of other players. Note that for altruistic player we would be considering a payoff in the form
$
r_{j,\alpha}=(1-\alpha_j)r_j+\alpha_j\sum_{j'\neq j}r_{j'}.
$

The payoff of a spiteful player $j$ is
\begin{eqnarray}  \nonumber
r_{j,\alpha}&=&(1-\alpha_j)(v_j-b_j)\prod_{j'\neq j}F_{j'}(v_{j'}(b_j))\\ & & \nonumber
-\alpha_j \mathbb{E}\left[\max_{j'\neq j}v_{j'}| \ \max_{j'\neq j}b_{j'} > b_{j}(v_j) \right]\\
&&
+\alpha_j \mathbb{E}\left[\max_{j'\neq j}b_{j'}| \ \max_{j'\neq j}b_{j'} > b_{j}(v_j) \right]
\end{eqnarray}

As mentioned above the main difficulty is that the private values distribution are asymmetric.
Denote that by $b^*_{j,\alpha}(v)$ the equilibrium bid strategy.
 Even when bidders are selfish ($\forall j,\ \alpha_j = 0$), the above analysis shows that the explicit expression of $b^*_{j,0}(v)$ is NOT a trivial task.  However, for symmetric type distribution, and symmetric coefficient $\alpha_j=\alpha,$ the payoff function reduces to
 $$
 r_{j,\alpha}=(1-\alpha)(v_j-b)F^{n-1}(v)-\alpha \int_{v_j}^{\bar{v}} s (n-1)F^{n-2}(s)f(s) ds$$ $$
 -\alpha \int_{b_j(v_j)}^{b_j(\bar{v})} (n-1) s F^{n-1}(v(s))f(v(s))v'(s) ds
 $$

 Using the fact that the derivative of $\int_{v(b)}^{\bar{v}}h(s)\ ds$ with the respect to $b$ is given by $-h(v(b))v'(b).$ The first order optimality condition yields to
 $$
 b(v)=v-\frac{(1-\alpha)F(v)}{(n-1)f(v)}b'(v).
 $$

 In particular for uniform distribution $F_j(v)=v$ over $[0,1]$   the  Bayes
Nash equilibrium strategy  for spiteful player satisfies the ODE:
 $$
 b_{\alpha}(v)=v[1-\frac{(1-\alpha)}{(n-1)}b'(v)].
 $$

It is easy to check that $b(v)=\frac{n-1}{n-\alpha} v$ is a solution of the above ODE.
 The symmetric equilibrium is increasing and convex with $\alpha.$ We now evaluate the revenue of the seller in equilibrium for $F_j(v)=\bar{m}(v)=v$.
\begin{eqnarray}
R_{\alpha}(\bar{m},\bar{m}))&=&
\mathbb{E}[ \max_{j}b_j(v_j)]\\
&=& \int b(v) n F^{n-1}(v) f'(v)dv\\ \nonumber 
&=& \int n (\frac{(n-1)}{(n-\alpha)}v)  v^{n-1}\ dv\\ \nonumber
&=& \frac{n(n-1)}{(n-\alpha)(n+1)}
\end{eqnarray}

Now we compare the revenue with $R_{\alpha}(F_1,F_2)$ where
$$
F_1(v)=\bar{m}(v)-\epsilon v(1-v)(\frac{1}{2}-v)
$$
and $$
F_2(v)=\bar{m}(v)+\epsilon v(1-v)(\frac{1}{2}-v)
$$

$$
R_{\alpha}(F_1,F_2)=R_{\alpha}(\bar{m},\bar{m})+O(\epsilon^2).
$$

{

\subsection{Fast algorithm for computing approximate equilibrium}
We construct a fast algorithm for computing approximate equilibrium.
Recall that the first optimality equation can be written as

$$
1+(b-v_{j}(b))\sum_{j'\neq j} \frac{F'_{j'}(v_{j'}(b))}{F_{j'}(v_{j'}(b))} v'_{j'}(b)=0
$$

Define the functional 
$H_j(\bar{b},v)=1+(b-v_{j}(b))\sum_{j'\neq j} \frac{F'_{j'}(v_{j'}(b))}{F_{j'}(v_{j'}(b))} v'_{j'}(b).$

We consider polynomial expansion of inverse-bid functions.   The function $v_j$ is written in a flexible
functional form $v_j(b) =\bar{b}-\sum_{k=0}^{+\infty} \mu_{j,k} (\bar{b}-b)^{k}.$ 

We truncate this polynomial to order $K\geq 2$ and replace it in the first order optimality equation.
Denote $\hat{v}_{j,K}(b)=\bar{b}-\sum_{k=0}^{K} \mu_{j,k} (\bar{b}-b)^{k}.$ 
Taking in account $2n$ boundary conditions, one gets that

$$
L(\bar{b},v)=\sum_{j=1}^n H_j(\bar{b},v)^2+\sum_{j=1}^n (v_j(\bar{b})-\bar{v})^2 +\sum_{j=1}^n (v_j(\underline{b})-\underline{v})^2\geq 0
$$
has a minimum $0$ and the minimizer is the equilibrium  inverse bid strategy $v.$
Hence, it is reasonable  to consider the functional $H$ when each of function $\hat{v}_{j,K}$ belongs the subspace $\mathcal{D}_K$ the set of polynomial with degree at most $K.$ This is space with dimension $K+1.$ The problem becomes
$$\inf_{(\hat{v}_{j,K})_j \in \mathcal{D}_K }L(\bar{b},\hat{v})= \inf_{(\mu_{j,k})_{j,k}}L(\bar{b},\hat{v})$$

Remember that $\bar{b}$ is the highest bid that
is submitted in equilibrium. It is 
therefore an unknown. Thus, we add this into the optimization problem. Hence one has $1+n(K+1)$ unknown variables to find.
Using a grid decomposition of the domain $[ \underline{v}, \bar{v}]$ with $T$ points inside, we arrive at a
nonlinear least-squares algorithm for  selecting $\bar{b}$ and $(\mu_{j,k})_{j,k}$
by solving 
$\inf_{(\hat{\bar{b}}_t)_t, (\mu_{j,k})_{j,k}}\sum_{t=1}^T L(\hat{\bar{b}}_t,\hat{v}_K),$
which yields

$$\inf_{(\hat{\bar{b}}_t)_t, (\mu_{j,k})_{j,k}}\sum_{t=1}^T \sum_{j=1}^n H_j(\hat{\bar{b}}_t,\hat{v}_K)^2 +$$ $$
T\sum_{j=1}^n (\hat{v}_{j,K}(\bar{b}_t)-\bar{v})^2 +T\sum_{j=1}^n (\hat{v}_{j,K}(\underline{b})-\underline{v})^2$$

The points $\hat{\bar{b}}_t$ on a grid  will be chosen  uniformly
spaced, i.e. $\hat{\bar{b}}_t= \bar{b}+\frac{t}{T} (\underline{v}-\bar{b})$

Standard Newton-Gauss-Seidel methods provide a very fast convergence rate to a solution if the initial guess if appropriately chosen. However the choice of initial data and guess need to be conducted.

We propose a numerical scheme for optimal bidding strategies.
First we solve the initial-value problem that starts at $b_0$ near but not equal to $\underline{b}$ so that the denominator do not vanish. ODE starts at $b_0>\underline{b}$ and moves forward. We fix the starting function to
$$v(b_0)=\underline{v}+(b-b_0)D_1+ \delta_1 D_2 (b-b_0)^{1+\delta_2}$$

where $D_1=(D_{11} ,D_{12})$ is the limit in $\underline{b}$ of the derivative, $D_{11}=\lim_{b\longrightarrow b_{0}} \frac{v_1(b)-v_1(\underline{b})}{b-\underline{b}},$
$D_{12}=\lim_{b\longrightarrow b_{0}} \frac{v_2(b)-v_2(\underline{b})}{b-\underline{b}},$
, $\delta_i$ are positive constant.
and $D_2=[D_{21},D_{22}],$ is the equivalent fractional derivative approximation around $\underline{b},$ $b_0=\underline{b}+10^{-\frac{7}{1+\delta_2}}$
 To find the solution we take the intersection $v_1(b_+)=v_2(b_+)=x$
 The solution is $(\frac{1}{x}v_1(b x),\frac{1}{x}v_2(b x)).$

\begin{figure}
  \centering
  \includegraphics[width=9cm]{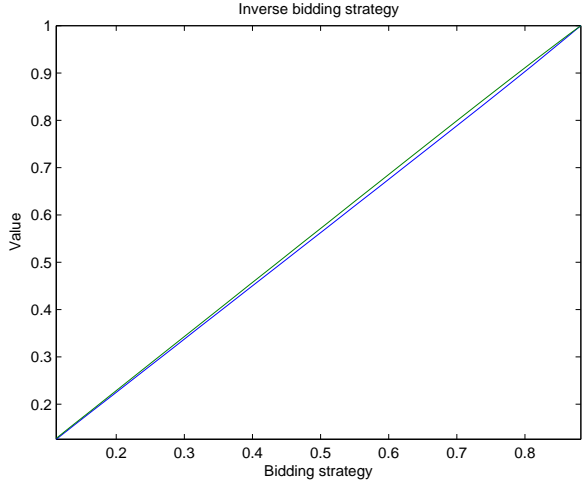}\\
  \caption{inverse optimal strategy for the distributions $F_1(v)=v^{7},$
$F_2(v)=v^{8}.$ The x-axis represents the optimal bidding strategy and the y-axis is the inverse optimal bidding strategy of the two-players.}\label{fignoauction1}
\end{figure}
Figure \ref{fignoauction1}  represents the inverse optimal bidding strategy for the cumulative distributions $F_1(v)=v^{7},$ $F_2(v)=v^{8}.$ The x-axis represents the optimal bidding strategy and the y-axis is the inverse optimal bidding strategy of the two players. For that case $\epsilon=\frac{1}{8}\left(\frac{7}{8}\right)^7.$ We observe in Figure \ref{fignoauction1} that the second curve is around the first one plus  $\frac{1}{64}\left(\frac{7}{8}\right)^{14}.$

  For $n\geq 2$ we do not understand yet the behavior of $b^*_{j,\alpha}(v)$. However, we are able to provide a useful approximation for asymmetric distribution in  function of their deviation to the mean. Moreover the approximation holds  for the revenue of the seller (auctioneer).
}
\subsubsection{Dynamic auction with asymmetric bidders} \label{subsecdynamic}
We now explain how our framework can be extended in a dynamic setting.
What is the motivation for dynamic auction?
Currently a large proportion of internet users employ search engines to
locate information~\cite{alex}. For example, search engines are used to obtain an instantaneous selection of offers from various providers operating in the cyberspace.
Internet auctions, on the other hand, connect potential sellers and buyers from
different locations in a virtual auction.
Buyers also often get multiple purchase opportunities over time. This is most obvious in the
case of today's online auction markets: eBay has auctions for physical goods closing every
second, while Google's Doubleclick and Microsoft's Advertising Exchange trade online advertisements at a much faster rate. This creates the possibility for buyers to inter-temporally
substitute, adjusting their current bids to account for the option value of waiting for future
purchasing opportunities.

In sponsored search auctions, such as those run by Google (AdWords) or Bing (adCenter), a query to a search engine triggers an auction for positions on the page returned by the query. Advertisers bid for positions or slots, and successful bids result
in ads being displayed alongside results to the query. The slots are ordered,
where higher slots are more valuable. Under a dynamic auction mechanism for slots, an agent places a single bid, bids are ranked by weight and
ranked bids determine the slot allocation. If an advert displayed in a slot is clicked,
then the advertiser pays the price to be in the current slot, a Pay-Per-Click payment model.

In order to capture the practical observation, we model the problem as dynamic auction between buyers and sellers. We specially focus on the buyers side. Since measurement are done in discrete time unit, we start with  a discrete time model.
The work in \cite{alex} considered repeated second-price auctions under homogeneous valuation distributions and  budget constraints.  Most auctions involve bidders that are heterogeneous ex ante. For example  ad networks may not have homogeneous valuation distribution for the advertisers. Therefore, Here we consider asymmetric distribution as well.

 Time space is $\mathcal{T}=\{0,1,\ldots,T-1\}$ where $T$ is the length of the horizon. Since budget is limited for the entire day or month, we transform the long-term budget constraints into an iterative budget state equation for bidder $j:$

$s_{j,t+1}=s_{j,t}-c(s_{j,t},b_{j,t})\ind_{\{ b_{j,t}> \max_{j'\neq j}b_{j',t}\}}$ where $s_{j,t}$ is the remaining budget  of bidder $j$ for the corresponding contract frame, $c(s_{j,t},b_{j,t})$ is the total cost for the bid (if winner). The state $s_{j,t}$ is subject to positivity constraint: $s_{j,t}\geq 0$ almost surely. It clearly limits the action space at time $t$ to the set of bids $b_{j,t}$ such that $c(s_{j,t},b_{j,t})\leq s_{j,t}.$ Let $g$ be the function $b \longmapsto c(s_{j,t},b).$
The individual state bidder $j$ is the pair $(s_{j,t},v_{j,t}).$ This generates a stochastic game with Markovian transition. Each bidder maximizes her long-term payoff given by
$$R_{j,T}=\mathbb{E}\left[\bar{g}(s_T) \right. $$ $$\left.+\sum_{t\in\mathcal{T}}(v_{j,t}-b_{j,t})\ind_{\{ \max_{j'\neq j} {b}_{j',t}(v_{j,t}) < b_{j,t}(v_{j,t})\}} | \ v_j\right]$$

A history of a bidder $j$ is at time $t$ is a collection $(s_{j,t'},v_{j,t'},b_{j,t'})_{t'\leq t}.$
A bidding strategy $\tau_{j,t}$ of $j$ at time $t$ is a mapping from the history $h_{j,t}\in \mathcal{H}_{j,t}$ to the restricted bid space $[\underline{b},\bar{b}]\bigcap [0, g^{-1}(s_{j,t})].$

The game is played as follows.
At opportunity $t\in\mathcal{T},$ every player $ j$, $j\in\mathcal{N},$
\begin{itemize}\item realizes his current value $v_{j,t}\in [\underline{v}, \bar{v}]$ distributed according to $F_{j,t}$
\item submits a bid $b_{j,t} = \tau_{j,t}(v_{j,t}, h_{j,t}),$ where
$\tau_{j,t}:\  [\underline{v}, \bar{v}]\times \mathcal{H}_{j,t}    \longrightarrow  [\underline{b}, \bar{b}]$
denotes his bidding strategy at auction $t$;
\item updates his information set \ $ h_{j,t+1} \in \mathcal{H}_{j,t+1},$ based on the results obtained in auction $t.$ Specifically, he forms a
set of beliefs about the distribution of the bidding profile of the players  $b_{-j,t+1}$ distributed according to $F_{j',t+1},\ j'\neq j.$
\end{itemize}

Let $V_{j,0}(s_0,T,\tau_{-j})$ be the value function of the bidder $j$ , i.e., it is the supremum, over all
possible bidding strategies, of the expectation of the payoff $R_{j,T}$ starting from
an initial budget $s_0$ when the other bidder strategy profile is $\tau_{-j}.$

Based on the classical Bellman optimality criterion, we immediately get the following result. The proof  is therefore omitted.
\begin{thm} \label{refdynamic}
Given $v_j$ the optimal strategy of a player $j$ satisfies
\begin{eqnarray}\nonumber b_{j,t}^*(v_j)&\in &
\arg\max_{b\leq g^{-1}(s_{j,t}) }\mathbb{E}_{\bar{b}_{-j,t}}\left[  (v_j-b)\ind_{\{ b> \bar{b}_{-j,t}\}}
\right.\\ \nonumber & & \left. +
\ind_{\{ b> \bar{b}_{-j,t}\}}V_{j,t+1}(s_{j,t}-c(s_{j,t},b))
\right. \\  & &  \left.
+\ind_{\{ b< \bar{b}_{-j,t}\}}V_{j,t+1}
\ | \ v_j\right],
\end{eqnarray}
where $\bar{b}_{-j,t}:=\max_{j'\neq j} {b}_{j',t}$ is the highest bid of the other players than $j.$

Let $x^*_1\in\arg\max_{x_1\in [\bar{b}_{-j,t},g^{-1}(s_{j,t})]} $ $ \left[v_j-x_1+V_{j,t}(s_{j,t}-c(s_{j,t},x_1))\right] .$

Let $W_j:=v_j-x_1^*+V_{j,t}(s_{j,t}-c(s_{j,t},x^*_1)),$
Then the optimal bidding strategy is
$$b^*_{j,t}(v_j)=\left\{\begin{array}{cc}
x^*_1& \mbox{if} \ W_{j,t}> V_{j,t+1}(s_{j,t})\\
x^*_2\leq \min\left(  \bar{b}_{-j,t},g^{-1}(s_{j,t}) \right) &  \mbox{otherwise}
\end{array}
\right.
$$

The value iteration is given by
\begin{eqnarray}\nonumber
V_{j,t}(s_{j,t})&=&
\mathbb{E}_{v_j}\sup_b \mathbb{E}_{\bar{b}_{-j,t}}\left[  (v_j-b)\ind_{\{ b> \bar{b}_{-j,t}\}}\right.\\ \nonumber
& &
+
\ind_{\{ b> \bar{b}_{-j,t}\}}V_{j,t+1}(s_{j,t}-c(s_{j,t},b))\\  & &
\left.+\ind_{\{ b< \bar{b}_{-j,t}\}}V_{j,t+1}(s_{j,t})
\ | \ v_j\right]
\end{eqnarray}

\end{thm}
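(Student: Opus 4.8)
The plan is to derive the characterization from the dynamic programming (Bellman) principle applied backward over the finite horizon $\mathcal{T}=\{0,1,\ldots,T-1\}$. First I would fix the competitors' strategy profile $\tau_{-j}$ and write the value function $V_{j,t}(s_{j,t})$ as the supremum, over feasible current bids, of the conditional expected continuation payoff from opportunity $t$ onward, given the current budget $s_{j,t}$ and realized value $v_j$. The Bellman optimality criterion lets me split this supremum into the immediate payoff at $t$ plus the optimal continuation value carried by $V_{j,t+1}$. The crucial structural observation is that the state transition $s_{j,t+1}=s_{j,t}-c(s_{j,t},b)\ind_{\{b>\bar{b}_{-j,t}\}}$ is driven entirely by the winning indicator: when bidder $j$ wins (i.e. $b>\bar{b}_{-j,t}$) the cost $c(s_{j,t},b)$ is deducted, so the next state is $s_{j,t}-c(s_{j,t},b)$, whereas when she loses the budget is carried over unchanged to $s_{j,t}$.

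Next I would assemble these pieces. The immediate reward is $(v_j-b)\ind_{\{b>\bar{b}_{-j,t}\}}$, and attaching the correct continuation term in each branch reproduces exactly the integrand inside the $\arg\max$ and the value iteration displayed in the statement, once I take the inner expectation $\mathbb{E}_{\bar{b}_{-j,t}}$ over the highest competing bid and the outer expectation $\mathbb{E}_{v_j}$ over the realized value. The positivity constraint $s_{j,t}\geq 0$ restricts feasible bids to those with $c(s_{j,t},b)\leq s_{j,t}$, equivalently $b\leq g^{-1}(s_{j,t})$, which is precisely why the supremum is taken over $b\leq g^{-1}(s_{j,t})$.

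To obtain the explicit piecewise form of $b_{j,t}^*$, I would partition the feasible bid region into the winning region $b\in[\bar{b}_{-j,t},g^{-1}(s_{j,t})]$ and the losing region $b<\bar{b}_{-j,t}$. On the winning region the objective is $v_j-b+V_{j,t+1}(s_{j,t}-c(s_{j,t},b))$, whose maximizer is $x_1^*$ attaining value $W_{j,t}$; on the losing region the objective collapses to the constant $V_{j,t+1}(s_{j,t})$, achieved by any $x_2^*\leq\min(\bar{b}_{-j,t},g^{-1}(s_{j,t}))$. Comparing $W_{j,t}$ with $V_{j,t+1}(s_{j,t})$ then selects the active branch and yields the stated decision rule, and a backward induction on $t$ with terminal condition $V_{j,T}=\bar{g}$ propagates the identity across the whole horizon.

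The main obstacle I anticipate is not the algebra but the measure-theoretic bookkeeping: justifying the interchange of supremum and expectation through a measurable-selection argument guaranteeing that $x_1^*$ can be chosen as a history-measurable maximizer, and handling the boundary tie case $b=\bar{b}_{-j,t}$ consistently with the equal-probability tie-breaking rule of the auction. Under the continuity of the transition $q$ and of the cost $c$ assumed in the model, together with compactness of the restricted bid space $[\underline{b},\bar{b}]\cap[0,g^{-1}(s_{j,t})]$, standard results on measurable selections and finite-horizon backward induction close these gaps, which is why the statement follows directly from the Bellman criterion and its proof may be omitted.
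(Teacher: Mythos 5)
Your proposal is correct and takes exactly the route the paper intends: the paper omits the proof, stating only that the result follows immediately from the classical Bellman optimality criterion, and your backward induction with the splitting into the winning region $[\bar{b}_{-j,t},g^{-1}(s_{j,t})]$ (immediate reward $v_j-b$ plus depleted-budget continuation) and the losing region (constant continuation $V_{j,t+1}(s_{j,t})$), together with the budget-feasibility restriction $b\leq g^{-1}(s_{j,t})$ and the comparison of $W_{j,t}$ against $V_{j,t+1}(s_{j,t})$, is precisely the standard unpacking of that criterion. Note in passing that your writing $V_{j,t+1}$ in the winning-region objective silently corrects what appears to be a typo in the paper's definition of $x_1^*$ (which has $V_{j,t}$), and your measurable-selection and tie-breaking caveats address exactly the bookkeeping the paper glosses over by using strict inequalities in the indicators.
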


For the symmetric setup we drop the index $j.$
\begin{thm}[Symmetric beliefs]
Given $v$ the optimal strategy of a generic player  satisfies
\begin{eqnarray} \nonumber
b_{t}^*(v)&\in & \arg\max_{b\leq g^{-1}(s_{t}) }\mathbb{E}_{\bar{b}_t}\left[  (v-b)\ind_{\{ b> \bar{b}_t\}} \right. \\ \nonumber && \left. +
\ind_{\{ b> \bar{b}_t\}}V_{t+1}(s_{t}-c(s_{t},b))\right. \\  & &  \left.
+\ind_{\{ b< \bar{b}_t\}}V_{t+1}(s_{t})
\ | \ v\right],
\end{eqnarray}
where $\bar{b}_t:=\max_{j'\neq j} {b}_{j',t}$ is the highest bid of the other players than $j.$

Let $x^*_1\in\arg\max_{x_1\in [\bar{b}_t,g^{-1}(s_{t})]} \left[v-x_1+V_{t}(s_{t}-c(s_{t},x_1))\right] .$
This implies that the optimal bidding strategy is
$$b^*_{t}(v)=\left\{\begin{array}{cc}
x^*_1& \mbox{if} \ W_t> V_{t+1}(s_{t})\\
x^*_2 < \min\left(  \bar{b}_{t},g^{-1}(s_{t}) \right) &  \mbox{otherwise}
\end{array}
\right.
$$

The value iteration is given by
\begin{eqnarray} \nonumber
V_{t}(s_{t})&=&\mathbb{E}_{v}\sup_b \mathbb{E}_{\bar{b}_{t}}\left[  (v-b)\ind_{\{ b> \bar{b}_{t}\}}
\right.\\ \nonumber  & & +
\ind_{\{ b> \bar{b}_{t}\}}V_{t+1}(s_t-c(s_t,b))\\ && \left.+\ind_{\{ b< \bar{b}_{t}\}}V_{t+1}(s_{t})
\ | \ v\right]
\end{eqnarray}
\end{thm}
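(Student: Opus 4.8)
The plan is to prove the statement by backward induction on $t$, invoking Bellman's principle of optimality for the finite-horizon stochastic game exactly as in Result~\ref{refdynamic}, but now specialized to the symmetric regime in which every bidder shares the same belief about the highest competing bid $\bar{b}_t$ and the common player index $j$ is suppressed. I would initialize the recursion at the horizon with $V_T(s_T)=\bar{g}(s_T)$ and then, assuming $V_{t+1}$ is already constructed, read off both the optimal bid $b_t^*(v)$ and the value $V_t(s_t)$ from a single one-step optimization.

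The central step is the \emph{win/lose decomposition}. For a realized value $v$ and remaining budget $s_t$, the feasible bids are those obeying the budget constraint $c(s_t,b)\le s_t$, equivalently $b\le g^{-1}(s_t)$. Conditioning on the random highest competing bid $\bar{b}_t$, the outcome splits into two mutually exclusive events: on $\{b>\bar{b}_t\}$ the bidder wins, earns the instantaneous surplus $v-b$, pays $c(s_t,b)$, and carries the depleted budget $s_t-c(s_t,b)$ into the continuation; on $\{b<\bar{b}_t\}$ the bidder loses, earns nothing, and keeps $s_t$ intact. Taking the conditional expectation over $\bar{b}_t$ of the sum of these contributions, and then averaging over the freshly drawn value $v$, reproduces precisely the arg-max characterization of $b_t^*(v)$ and the value iteration for $V_t(s_t)$ displayed in the statement.

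To extract the piecewise (threshold) form of $b_t^*(v)$, I would note that the inner maximization decouples into two regimes. Restricting $b$ to the winning region $[\bar{b}_t,g^{-1}(s_t)]$ and maximizing the surplus-plus-continuation objective identifies the maximizer $x_1^*$ together with the associated win value $W_t$; alternatively, any bid strictly below $\bar{b}_t$ guarantees a loss and secures the continuation value at the unchanged budget. Comparing the two, the optimal action is to submit $x_1^*$ precisely when $W_t$ exceeds the loss value, and otherwise to underbid at some $x_2^*<\min(\bar{b}_t,g^{-1}(s_t))$ so as to lose deliberately, which is the claimed rule.

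The step I expect to be the main obstacle is handling the discontinuity created by the indicators $\ind_{\{b>\bar{b}_t\}}$: the bid $b$ enters both the event defining a win and the continuation value, so the inner objective is not continuous in $b$ across the random threshold $\bar{b}_t$. I would dispatch this by observing that, since the valuations carry $C^1$ distributions and the equilibrium bid maps are strictly monotone, the tie event $\{b=\bar{b}_t\}$ has probability zero and may be discarded, so the supremum is attained either strictly inside the winning region or by any loss-inducing bid. A secondary subtlety is that in the genuinely symmetric game the belief over $\bar{b}_t$ must coincide with the distribution generated by the symmetric strategy being computed, a fixed-point consistency requirement; for the statement as phrased the belief is taken as given, so only the one-step optimality computation is needed and existence of the fixed point is deferred to the standard argument referenced earlier.
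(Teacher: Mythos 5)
Your proposal is correct and follows essentially the same route as the paper: the paper proves this theorem as an immediate corollary of Result~\ref{refdynamic} by dropping the index $j$ under symmetric beliefs, and Result~\ref{refdynamic} itself is asserted directly from the classical Bellman optimality criterion, which is exactly the backward-induction, win/lose-decomposition argument you spell out. Your two added refinements --- discarding the tie event $\{b=\bar{b}_t\}$ as a null set and deferring the fixed-point consistency of the belief over $\bar{b}_t$ to the equilibrium-existence argument --- are details the paper leaves implicit, not a different method.
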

\begin{proof}
The proof follows as a corollary of result \ref{refdynamic}.
\end{proof}
To complete the value iteration system we choose a terminal payoff $V_{T}(s)=\bar{g}(s).$
\begin{thm}[Non-asymptotic mean-field] \label{reyu}
Let $F_{j,t}(v)=\bar{m}_t(v)+\epsilon \gamma_{j,t}(v),$ where
$$ \gamma_{j,t}(v)=\frac{F_{j,t}(v)-\bar{m}_t(v )}{\epsilon},$$
$$\bar{m}_t(v)=\frac{1}{n}\sum_{j'=1}^n F_{j',t}(v),$$ and
$$\epsilon=\max_j \sup_{t\in \mathcal{T}}\sup_{[\underline{v}, \bar{v}]}\ | F_{j,t}(v)-\bar{m}_t(v ) |,$$

Then, the long-term revenue of the seller $R_0(s_0;F_1,\ldots, F_n)$ is in order of $R_0(s;\bar{m},\ldots,\bar{m})+O(\epsilon^2)$ for any $n\geq 2.$
\end{thm}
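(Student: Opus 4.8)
The plan is to treat the seller's long-term revenue $R_0$ as the global payoff $r_g$ of the dynamic stochastic game set up in Section \ref{secdynamicscale}, with the bidder distributions $(F_{1},\ldots,F_{n})$ playing the role of the aggregated parameters, and then to invoke the non-asymptotic mean-field error bound of Result \ref{thmmain2} essentially verbatim over the function space of cumulative distributions. The two hypotheses $A0$ and $A1$ must first be verified in this functional setting: the revenue is, up to an additive constant, the expectation of a quantity built from the product $\prod_{j\in\mathcal{N}}F_{j}$, so it is invariant under any permutation of the bidder labels and hence indistinguishable (\textbf{A0}); and since each $F_{j}\in C^{1}$ and the budget transition $q$ is continuous with a smooth instantaneous payoff, the revenue functional is twice (G\^ateaux) differentiable with respect to each $F_{j}$ (\textbf{A1}).

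First I would fix the time index $t$ and expand $R_0$ around the symmetric profile $(\bar{m}_t,\ldots,\bar{m}_t)$ using the parametrization $F_{j,t}=\bar{m}_t+\epsilon\gamma_{j,t}$. Because the definition of $\bar{m}_t$ forces $\sum_{j'}\gamma_{j',t}=0$, the first-order directional-derivative term collapses exactly as in the second item of the first Result and in the computation leading to Result \ref{thmmain}: the common derivative $\partial_{F_1}R_0$ factors out of the sum and multiplies $\sum_{j}\gamma_{j,t}=0$. What survives is the second-order functional term, which by the same symmetrization used in Result \ref{thmmain} is controlled by $\delta_{\bar{m}_t,\bar{r}}\sum_{j}\|F_{j,t}-\bar{m}_t\|^2$ and is therefore $O(\epsilon^2)$.

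Next I would propagate this per-stage bound through the horizon exactly as in the argument following Result \ref{thmmain2}. The subtlety is that replacing the asymmetric $(F_{j,t})$ by the symmetric $\bar{m}_t$ changes the equilibrium bidding strategies, which changes the cost $c(s_{j,t},b_{j,t})$, hence the budget state $s_{j,t+1}$, and therefore the state at which the next stage error is evaluated. I would absorb this dependence using the continuity of $q$ to pass to the uniform constant $\bar{\delta}_{t,\bar{m}_t,\bar{r}}=\sup_{s}\delta_{t,\bar{m}_t,\bar{r}}(s)$, then sum the telescoped value-iteration identity $V_{t}(s_t)=r_g(s_t,a^*_t,m^*_t)+\mathbb{E}_{s'}V_{t+1}(s')$ over $t\in\mathcal{T}$, adding the terminal mismatch $\|\bar{g}(s_T)-\bar{g}(\bar{s}_T)\|$. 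Summing the $T$ local $O(\epsilon_t^2)$ errors against $\epsilon=\max_{j,t}\sup_v|F_{j,t}-\bar{m}_t|$ yields the claimed $R_0(s_0;F_1,\ldots,F_n)=R_0(s;\bar{m},\ldots,\bar{m})+O(\epsilon^2)$, uniformly in $n\geq 2$.

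The hard part is the functional Taylor expansion in the second step: unlike the finite-dimensional Results \ref{thmmain} and \ref{thmmain2}, here the perturbation lives in an infinite-dimensional space of distributions, and the revenue depends on $F_{j}$ both directly, through the winning probabilities $\prod_{j'} F_{j'}$, and indirectly, through the equilibrium inverse-bid maps $v_{j}(\cdot)$ solving the coupled boundary-value system of Result \ref{refdynamic}. I would make the expansion rigorous by treating the revenue as a composition, bounding the implicit sensitivity of the equilibrium map to $\epsilon$, and invoking the continuity and smoothness hypotheses so that the indistinguishability-induced cancellation of the first-order term persists after the chain rule; this is precisely where $A0$ and $A1$ together with the continuity of $q$ are indispensable.
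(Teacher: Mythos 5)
Your proposal follows the paper's own route: the paper's entire proof of Result \ref{reyu} is the single sentence that it is a ``direct extension of result \ref{thmmain} to the time space $\mathcal{T}$,'' and your argument --- indistinguishability of the revenue through the product $\prod_{j\in\mathcal{N}}F_{j}$, cancellation of the first-order term since $\sum_{j}\gamma_{j,t}=0$, a per-stage second-order bound $\delta_{\bar{m}_t,\bar{r}}\sum_{j}\|F_{j,t}-\bar{m}_t\|^2=O(\epsilon^2)$, and summation over $\mathcal{T}$ with the state dependence absorbed via continuity of $q$ and the uniform constant $\bar{\delta}_{t,\bar{m}_t,\bar{r}}$ --- is exactly that extension, mirroring the paper's own proof sketch of Result \ref{thmmain2}. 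If anything, you are more careful than the paper, which silently passes over the infinite-dimensional character of the perturbation and the implicit dependence of the equilibrium bid maps on the $F_{j,t}$ that you correctly identify as the genuinely hard step.
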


\begin{proof}
The proof of the first statement is a direct extension of result \ref{thmmain} to the time space $\mathcal{T}.$ 
\end{proof}
\begin{rem}
As a consequence of the Result \ref{reyu},  if $F_{j,t}=F_{j,t}^n$ depends on $n$ and satisfies ${n}^{\alpha}(F_{j,t}^n-\bar{m}_t(v)) \leq \delta$ for some $\alpha>0$ then,
the error gap between the finite regime and the infinite regime in equilibrium is in order of $O(\frac{1}{n^{2\alpha}})$ which can be very small even for small $n$ but large $\alpha.$

If  ${n}^{\alpha}(F_{j,t}^n-\bar{m}_t(v)) \leq \delta$ then the error gap at time $t$ reduces to $\delta_{t,\bar{m}_t,\bar{r}}\frac{\delta}{{n}^{2\alpha}}$ and hence the global error in $\mathcal{T}$ is at most
$$
\frac{\delta}{{n}^{2\alpha}}\left(\sup_{t\in \mathcal{T}}\delta_{t,\bar{m}_t,\bar{r}}\right).
$$
Note that is a significant improvement of the  mean-field approximation since the use of mean-field convergence \`a la {\it de Finetti}~\cite{dyna2} gives a convergence order of $O(\frac{1}{\sqrt{n}}).$ The use of the indistinguishability property of the payoff function helps us to provide a more precise error.
\end{rem}

\subsection{Scaled auction problem}
Using a scaling factor $\frac{1}{\lambda},$ to the starting state (budget) and horizon, the value
$$W_{0}^{\lambda}(s,T):=\lambda V_{0}(\frac{s}{\lambda},\frac{T}{\lambda}), $$
has a certain limit when $\lambda$ goes to $0.$  Let
$\lim_{\lambda} W_{0}^{\lambda}(s,T)=w(s,T).$
Let $T^*=\inf\{ t\geq 0\  \ | \ s_t\leq 0, s_0>0\}.$
Then, the value $w$ is solution of the following differential game
$$w(s,T)=\sup_{\tau}\ \int_0^{\inf(T^*,T)}
\mathbb{E}_{v,\bar{b}} \left[(v_t-b_t)\ind_{\{b_t>\bar{b}_t\}}\ \right] dt $$
subject to
$$\dot{s}_t=-\mathbb{E} c(s_t,b_t),$$
$T^*=\inf\{ t\geq 0,\ | \ s_t=0 , s_0>0\},$
$w(0,T)=0$ and $w(s,0)=0.$
Let $r(b)$ be the instantaneous payoff from the above formulation.
Introduce the Hamiltonian
$$ \bar{H}(s,p)=\sup_{b}\left[ r(b)-p c(s,b) \right],$$

The value $w$ satisfies the Hamilton-Jacobi-Bellman equation
$$
\partial_T w+\bar{H}(s,\partial_sw)=0
$$
{
\subsection{Continuous time game with incomplete information}
We consider a particular state dynamics given by drift $f$, independent individual Brownian motion $\sigma_i$ and a common Brownian motion $\sigma_c$. The instantaneous cost is $c(t,x,u,m)$
Define the Hamiltonian as $H(t,x,p,m)=\inf_{u} \{ c + \langle p, f\rangle \}$ and let $v(t,x,m)$ be an equilibrium cost value. Following \cite{pierrelouis}, the value satisfies
$$
\left\{
\begin{array}{c}
\partial_t v-(\sigma_i+\sigma_c) \Delta_xv +H(t,x,\partial_xv, m)\\
+ \langle \partial_m v; -(\sigma_i+\sigma_c)\Delta_x m+div_x(m f(t,x,u^*,m)) \rangle\\
-\sigma_c \langle \partial^2_{mm}v\ \partial_{x}m; \partial_{x}m\rangle+ 2\sigma_c \langle \partial_m\partial_xv; \partial_x m \rangle =0\\
v(0, x,m)=g(x,m)\\
\end{array}
\right.
$$
where $div_x$ is the divergence operator.
Considering $n$ minor players where the evolution of the measure is now replaced by the evolution of beliefs in a Bayesian mean field game, the long-term cost function of a player
is  $R(\tau_1,\ldots, \tau_n)$ which in order of $R(\hat{m},\ldots,\hat{m})+O(\epsilon^2)=v(T,x,\hat{m})+O(\epsilon^2)$ where $\hat{m}$ is the average measure (belief). Thus, the non-asymptotic mean field game approach allows us to understand the behavior of the equilibrium cost when the players have different beliefs (incomplete information game) which are near the average belief measure.

\subsection{Extensions}
\subsubsection{Near-indistinguishable games}
In this section we assume that the payoff functions are strictly positive and satisfy a near-indistinguishability property defined as follows:

\begin{defi}[Non-scalable notion]
The game $G$ with payoffs $r_j$ is near-indistinguishable if it is $\epsilon-$indistinguishable for  a certain small $\epsilon>0$  i.e. there exists an indistinguishable function ${r}_g$ such that $\sup_a | r_j(a)-{r}_g(a)|\leq \epsilon $
\end{defi}

\begin{defi}[Scalable  notion]
The game $G$ with payoffs $r_j$ is scalable near-indistinguishable if it is $\epsilon-$scalable indistinguishable for  a certain small $\epsilon>0$  i.e. there exists an indistinguishable function ${r}_g$ such that $|r_j(a)-{r}_g(a)|\leq \epsilon |{r}_g(a)|$
 \end{defi}
 The following result follows from the definition of near-indistinguishability.
 \begin{thm}
Result \ref{thmmain2} extends to near-indistinguishable case with an approximation given by
$$
r_j(a)=r_g(a)+O(\epsilon_1)=\bar{r}(\bar{m})+O(\epsilon_1)+O(\epsilon_2^2)
$$ for the first type (non-scalable) and

$$
r_j(a)=(1\pm O(\epsilon_1))r_g(a)=(1\pm O(\epsilon_1))\bar{r}(\bar{m})+O(\epsilon_2^2)
$$
where $\epsilon_1$ comes from the scalable near-indistinguishability error (second type, scalable notion) and $\epsilon_2$ is the heterogeneity gap between action profile $a$ and $\bar{m}.$
\end{thm}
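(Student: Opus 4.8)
The plan is to obtain the two claimed expansions by decoupling the near-indistinguishability error from the heterogeneity error and then composing the two bounds. The only genuine content is Result \ref{thmmain}; the near-indistinguishability hypotheses contribute nothing beyond a uniform comparison between $r_j$ and the indistinguishable surrogate $r_g$ supplied by the two definitions.

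First I would handle the non-scalable case. By the defining inequality $\sup_a |r_j(a)-{r}_g(a)|\leq \epsilon_1$ one has $r_j(a)=r_g(a)+O(\epsilon_1)$ uniformly in $a$, where $r_g$ is the indistinguishable function furnished by the definition. Since $r_g$ is indistinguishable by construction (so $A0$ holds) and is taken smooth (so $A1$ holds), Result \ref{thmmain} applies to $r_g$ and gives
\[
r_g(a)-\bar{r}(\bar{m})=\delta_{\bar{m},\bar{r}}\sum_{j=1}^n (a_j-\bar{m})^2=O(\epsilon_2^2),
\]
where $\epsilon_2=\|a-\bar{m}^{\bigotimes n}\|$ is the heterogeneity gap. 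Substituting the second relation into the first yields $r_j(a)=\bar{r}(\bar{m})+O(\epsilon_1)+O(\epsilon_2^2)$, which is the first claim.

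Next I would treat the scalable case along the same lines but keeping the multiplicative form of the error. The hypothesis $|r_j(a)-{r}_g(a)|\leq \epsilon_1 |{r}_g(a)|$ is exactly $r_j(a)=(1\pm O(\epsilon_1))r_g(a)$. Inserting the expansion $r_g(a)=\bar{r}(\bar{m})+O(\epsilon_2^2)$ from Result \ref{thmmain} gives
\[
r_j(a)=(1\pm O(\epsilon_1))\bar{r}(\bar{m})+(1\pm O(\epsilon_1))\,O(\epsilon_2^2).
\]
Because the factor $1\pm O(\epsilon_1)$ stays bounded (indeed close to $1$ for small $\epsilon_1$), the second summand is still $O(\epsilon_2^2)$, so the cross term $O(\epsilon_1)\,O(\epsilon_2^2)$ is absorbed and one recovers $r_j(a)=(1\pm O(\epsilon_1))\bar{r}(\bar{m})+O(\epsilon_2^2)$, the second claim.

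The delicate points, rather than any computation, are two. The first is the legitimacy of invoking Result \ref{thmmain}: it requires the surrogate $r_g$ to satisfy both $A0$ and $A1$, so I would make explicit that $A0$ is automatic (we chose $r_g$ indistinguishable) while $A1$ must be imposed as a standing smoothness assumption on $r_g$, since $r_j$ itself need not be smooth. The second, in the scalable case, is verifying that the multiplicative error does not corrupt the quadratic residual; this is exactly the boundedness argument for $1\pm O(\epsilon_1)$ above. Finally, to genuinely extend Result \ref{thmmain2} (the dynamic statement) rather than only its one-shot core, I would apply the per-stage bound at each $t\in\mathcal{T}$, replace $\epsilon_2$ by the $l^{2}_T$-norm of the action deviations, and sum over the horizon exactly as in the proof of Result \ref{thmmain2}, using continuity of the state transition $q$ to pass to a uniform constant $\sup_t \bar{\delta}_{t,\bar{m}_t,\bar{r}}$. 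The main obstacle there is controlling how the additive ($\epsilon_1$) or multiplicative ($1\pm O(\epsilon_1)$) near-indistinguishability error accumulates across the $T$ stages and through the terminal term $\bar{g}$, which I expect to produce an $O(T\epsilon_1)$ (respectively $(1\pm O(\epsilon_1))$-scaled) correction stacked on top of the $O(\epsilon_2^2)$ heterogeneity bound.
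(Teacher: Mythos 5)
Your proof is correct and follows essentially the same route the paper intends: the paper offers no argument beyond the one-line remark that the result ``follows from the definition of near-indistinguishability,'' and your composition of the additive (resp.\ multiplicative) comparison $\sup_a|r_j(a)-r_g(a)|\leq\epsilon_1$ (resp.\ $|r_j(a)-r_g(a)|\leq\epsilon_1|r_g(a)|$) with Result \ref{thmmain} applied to the smooth indistinguishable surrogate $r_g$ is precisely that intended argument. Your two added caveats --- that $A0$--$A1$ must be imposed on the surrogate $r_g$ rather than on $r_j$ itself, and that in the genuinely dynamic extension of Result \ref{thmmain2} the additive error accumulates as $O(T\epsilon_1)$ across the horizon while the multiplicative factor $1\pm O(\epsilon_1)$ merely rescales and is absorbed --- are details the paper silently glosses over, and they strengthen rather than alter its proof.
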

\subsubsection{Discussions}
One of the main motivations to study mean field games is the possibility to reduce the high complexity of interactive dynamical systems into a low-complexity  and easier to solve ones. However, the infinite mean field game system suggests a continuum of players may not be realistic in many cases of interests. Then, the question addressed in this paper  is to know whether the mean field game ideas can be used in the finite regime. We show that the answer is positive for  important classes of  payoff functions.

 An important statement is that if the asymptotic mean-field system can reduce the computational complexity then, the same analysis can be conducted in the finite regime. Moreover, the non-asymptotic mean field game model developed here does not require additional assumptions than the classical ones (namely A0 and A1) used in asymptotic mean field game theory.
 If the indistinguishability assumption fails but still the asymptotic mean field system is easily solvable then, one can classify the finite system too by class/and type and hence reduce into a game with less number of classes (than players) and in each class the indistinguishability property holds. We refer to such games as indistinguishability per class games.
Interestingly our approximation results extends to near-indistinguishable games as well as to indistinguishable per class games.
}
\section{Concluding remarks}
We have presented a mean field framework where the indistinguishability property can be exploited to cover not only the asymptotic regime but also the non-asymptotic regime. In other words, our approximation is suitable not only for large systems but also for a small system with few players.
The framework can be used to approximate unknown functions in heterogeneous systems, in optimization theory as well as in game theory.

This work suggests several paths for future research. First, the approach introduced
here can be used in several  applications, starting from other queueing and auctions
formats, in particular to private information models  where strategies are functions  of types. Second, more progress needs to be done by considering
a less restrictive action and belief spaces that are far from the mean of the mean field.
The smoothness condition on the objective function may not be satisfied in practice.
Finally, we would like to understand how large the deviation   of the non-asymptotic result is compared to a symmetric vector (non-alignment level).

\bibliographystyle{plain}

{\bf Hamidou Tembine} (S'06-M'10-SM'13) received his M.S. degree from Ecole Polytechnique and his Ph.D. degree from University of Avignon. His current research interests include evolutionary games, mean field stochastic games, distributed strategic learning and applications. In 2014 Tembine received the Outstanding Young Researcher Award from IEEE ComSoc.
He was the recipient of 5 best paper awards and has co-authored  two books. More details can be  found at tembine.com

\end{document}